\newtheorem{theorem}{Theorem}
\newtheorem{definition}{Definition}
\newtheorem{lemma}{Lemma}
\newtheorem{corollary}{Corollary}
\title{Combining Priors with Experience: Confidence Calibration Based on Binomial Process Modeling}
\author {
	Jinzong Dong,
	Zhaohui Jiang\thanks{corresponding author.},
	Dong Pan,
	Haoyang Yu
}
\begin{document}

\maketitle

\begin{abstract}
Confidence calibration of classification models is a technique to estimate the true posterior probability of the predicted class, which is critical for ensuring reliable decision-making in practical applications. Existing confidence calibration methods mostly use statistical techniques to estimate the calibration curve from data or fit a user-defined calibration function, but often overlook fully mining and utilizing the prior distribution behind the calibration curve. However, a well-informed prior distribution can provide valuable insights beyond the empirical data under the limited data or low-density regions of confidence scores. To fill this gap, this paper proposes a new method that integrates the prior distribution behind the calibration curve with empirical data to estimate a continuous calibration curve, which is realized by modeling the sampling process of calibration data as a binomial process and maximizing the likelihood function of the binomial process. We prove that the calibration curve estimating method is Lipschitz continuous with respect to data distribution and requires a sample size of $3/B$ of that required for histogram binning, where $B$ represents the number of bins. Also, a new calibration metric ($TCE_{bpm}$), which leverages the estimated calibration curve to estimate the true calibration error (TCE), is designed. $TCE_{bpm}$ is proven to be a consistent calibration measure. Furthermore, realistic calibration datasets can be generated by the binomial process modeling from a preset true calibration curve and confidence score distribution, which can serve as a benchmark to measure and compare the discrepancy between existing calibration metrics and the true calibration error. The effectiveness of our calibration method and metric are verified in real-world and simulated data. We believe our exploration of integrating prior distributions with empirical data will guide the development of better-calibrated models, contributing to trustworthy AI.
\end{abstract}

%
 \begin{links}
 \link{Code}{https://github.com/NeuroDong/TCEbpm}
 \link{Extended version}{https://arxiv.org/abs/2412.10658}
 \end{links}

\section{Introduction}
The prediction accuracy of modern machine learning classification methods such as deep neural networks is steadily increasing, leading to adoption in many safety-critical fields such as intelligent transportation \cite{lu2024disentangling}, industrial automation \cite{zidonghua}, and medical diagnosis \cite{luo2024knowledge}. However, decision-making systems in these fields not only require high accuracy but also require signaling when they might be wrong \cite{munir2024cal}. For example, in an automatic disease diagnosis system, when the confidence of the diagnostic model is relatively low, the decision-making should be passed to the doctor \cite{jiang2012calibrating}. Specifically, along with its prediction, a classification model should offer accurate confidence (matching the true probability of event occurrence). In addition, accurate confidence also provides more detailed information than the no-confidence or class label \cite{huang2020experimental}. For example, doctors can gather more information to make more reliable decisions in “there is a 70\% probability that the patient has cancer” than just a class label of “cancer”. Furthermore, accurate confidence facilitates the incorporation of classification models into other probabilistic models. For instance, accurate confidence allows active learning to select more representative samples \cite{han2024balque} and improves the generalization performance of knowledge distillation \cite{li2023distilling}. Therefore, pursuing more accurate confidence for classification models is a significant work \cite{penso2024confidence,wang2024moderate}.

However, modern classification neural networks often suffer from inaccurate confidence \cite{guo2017calibration}, which means that their confidence does not match the true probabilities of predicted class. For example, if a deep neural network classifies a medical image as \enquote{benign} with a confidence score of 0.99, the true probability of the medical image being \enquote{benign} could be significantly lower than 0.99, and even its true class may be \enquote{malignant}. Therefore, in recent years, this problem has been attracting increasing attention \cite{dong2024survey,geng2024survey}, and many confidence calibration methods, which aim to obtain more accurate confidence through additional processing, have been proposed \cite{silva2023classifier,zhang2023survey}.

Previous works calibrate confidence mostly from three directions: 1) Performing calibration during the classifier's training (train-time calibration), usually modifying the classifier's objective function \cite{classAdaptive,muller2019does,fernando2021dynamically}; 2) Binning confidence scores and estimating the calibrated confidence using the average accuracy inside the bins (binning-based calibration) \cite{zadrozny2001obtaining,naeini2015obtaining,patelmulti}; 3) Fitting a function on logit or confidence score so that the outcome of the function is calibrated (fitting-based calibration) \cite{platt1999probabilistic,guo2017calibration,zadrozny2002transforming,mixNmatch}. Despite the existence of the three valuable methods mentioned above, these methods mainly focus on how to estimate calibrated confidence from data or fit a user-defined calibration function (e.g., temperature scaling \cite{guo2017calibration} is a scaling function of logit, and platt scaling \cite{platt1999probabilistic} is a sigmoid function of logit), without systematically and principledly utilizing the prior distributions behind the calibration curve. However, in the field of statistics, especially bayesian statistics, the utilization of prior distributions is crucial. In particular, when data size is insufficient, such as in low-density regions of confidence scores, a correct prior distribution can be more informative than the data. 

Therefore, a natural but ignored question is studied: how to integrate the prior distribution behind the calibration curve with the empirical data to achieve a better calibration and develop a more accurate calibration metric? To address this, this paper conducts binomial process modeling on the sampling process of calibration data, which cleverly integrates the prior distribution of the calibration curve with the empirical data. By maximizing the likelihood function of the binomial process, a continuous calibration curve can be estimated. A general and effective prior is suggested as the prior distribution behind calibration curves, which is a principled function family derived from beta distributions. We prove that the estimated calibration curve is Lipschitz continuous with respect to data distribution and requires only a sample size of $3/B$ of that required for histogram binning, where $B$ represents the number of bins. Furthermore, using the estimated calibration curve, a new calibration metric is proposed, named $TCE_{bpm}$. $TCE_{bpm}$ is proved to be a consistent calibration measure \cite{blasiok2023unifying}. Finally, by modeling the sampling process of the calibration data as a binomial process, we can sample realistic calibration data from a preset true calibration curve and confidence score distribution, which can serve as a benchmark to measure and compare the discrepancy between existing calibration metrics and the true calibration error.

Our contributions can be summarized as follows:
\begin{itemize}
	\item A new calibration curve estimating method is proposed, which integrates prior distributions behind the calibration curve with empirical data through binomial process modeling. By maximizing the likelihood function of the binomial process, a continuous calibration curve can be estimated. We prove that the new calibration curve estimation method is Lipschitz continuous with respect to the data distribution and requires only a sample size of $3/B$ of that required for histogram binning, where $B$ represents the number of bins.
	\item A new calibration metric ($TCE_{bpm}$) is proposed, which leverages the estimated calibration curve to estimate the true calibration error (TCE). Theoretically, $TCE_{bpm}$ is proven to be a consistent calibration measure.
	\item A realistic calibration data simulation method based on binomial process modeling is proposed, which can serve as a benchmark to measure and compare the discrepancy between existing calibration metrics and the true calibration error.
\end{itemize}

\section{Background and Related Work}
\label{Background}
For a $K$-class classification problem, let $(X,Y)\in \mathcal{X} \times \mathcal{Y}$ be jointly distributed random variables, where $\mathcal{X} \subset R^{d}$ denotes the feature space and $\mathcal{Y}=\{1,2,...,K\}$ is the label space. The classification model can be expressed as $f(X): \mathcal{X} \to \mathcal{S}$, where $S=(S_{1},S_{2},...,S_{K}) \in \mathcal{S} \subset \Delta _{K-1}$ and $\Delta _{K-1}$ represents a simplex with free-degree $K-1$. The predicted class $\hat Y = \mathop {{\mathop{\rm argmax}\nolimits} }\limits_k {\{ {S_k}\} _{1 \le k \le K}}$, and the confidence score of predicted class is $\hat S = \max {\{ {S_k}\} _{1 \le k \le K}}$.

Typically, we just care about the confidence of the predicted class. In this case, a multi-classification problem can be formally unified into a binary classification problem. Let the \enquote{hit} variable $H=I[Y = \hat Y]$, where $I$ is the indicative function, that is, when $Y = \hat Y$, $I[Y = \hat Y]=1$, otherwise $I[Y = \hat Y]=0$. Therefore, the data samples become observations of $(\hat S,H)$.
\subsection{Confidence Calibration}
The purpose of confidence calibration is to make the confidence of predicted class match the true posterior probability of the predicted class. Formally, we state:
\begin{definition}
	\textnormal{\textbf{(Perfect calibration)}} A classification model is perfectly calibrated if the following equation is satisfied:
	\begin{equation}
		P(Y = \hat Y|\hat S = \hat s) = \hat s,
		\label{top_calibrated_eq}
	\end{equation}
	where $\hat s $ is the observed confidence score of predicted class, $\hat Y$ is the predicted class.
	\label{top_calibration}
\end{definition}

Obviously, Eq. \ref{top_calibrated_eq} can also be written as $P(H = 1|\hat S = \hat s) = \hat s$. Typically, we call $P(H=1|\hat S)$ as the true calibration curve.

\subsection{Estimates of Calibration Curve}
Currently, confidence calibration methods can be mainly divided into two groups: \textbf{train-time} calibration \cite{classAdaptive,muller2019does,fernando2021dynamically} and \textbf{post-hoc} calibration \cite{guo2017calibration, dirichlet, mixNmatch, intraOderPreserving}. Train-time calibration usually performs calibration during the training of the classifier by modifying the objective function, which may increase the computational cost of the classification task \cite{naeini2015obtaining} and affect the classification effect \cite{sampleDependent}. Post-hoc calibration learns a transformation (referred to as a calibration map) of the trained classifier’s predictions on a calibration dataset in a post-hoc manner \cite{mixNmatch}, which does not change the weights of the classifier and usually performs simple operations.

Pioneering work along the post-hoc calibration direction can be divided into two subgroups: \textbf{binning-based} calibration and \textbf{fitting-based} calibration. Binning-based calibration methods divide the confidence scores into multiple bins and estimate the calibrated value using the average accuracy inside the bins. The classic methods include Histogram binning \cite{zadrozny2001obtaining}, Bayesian binning \cite{naeini2015obtaining}, Mutual-information-maximization-based binning \cite{patelmulti}. Fitting-based calibration methods fit a function on logit or confidence score so that the outcome of the function is calibrated. The classic methods include Platt scaling \cite{platt1999probabilistic}, Temperature scaling \cite{guo2017calibration}, Isotonic regression \cite{zadrozny2002transforming}, Mix-n-Match \cite{mixNmatch}.

\subsection{Estimates of Calibration Error}
\subsubsection{True Calibration Error}
The true calibration error is described as the $l_{p}$ norm difference between the confidence score of the predicted class and the true likelihood of being correct \cite{kumar2019verified}:
\begin{equation}
	TCE = {({E_{\hat S}}[|\hat S - P(H = 1|\hat S){|^p}])^{\frac{1}{p}}}.
	\label{TCE}
\end{equation}
The true calibration curve $P(H = 1|\hat S)$ and the distribution of confidence scores $\hat S \sim \hat \mathcal{S}$ determine the value of TCE. TCE is not computable since the ground truth of $P(H = 1|\hat S)$ and the true distribution of $\hat S$ cannot be obtained in practice. Therefore, statistical methods are needed to estimate $P(H = 1|\hat S)$ and the distribution of $\hat \mathcal{S}$, and then estimate the true calibration error.
\subsubsection{Binning-Based Calibration Metrics}
Binning-based calibration metrics use the average accuracy of each bin to approximate $P(H=1|\hat S)$ and use the sample size proportion of the bin to approximate the $\hat \mathcal{S}$. Formally, assume that all confidence scores are partitioned into $M$ equally-spaced non-overlapping bins, and the $i$-th bin is represented by $B_{i}$, then the binning-based expected calibration error ($ECE_{bin}$) is calculated as follows:
\begin{equation}
	EC{E_{bin}} = {(\sum\limits_{i = 1}^M {\frac{{\left| {{B_i}} \right|}}{N}{{\left| {{\rm{acc}}({B_i}) - {\rm{conf}}({B_i})} \right|}^p}} )^{\frac{1}{p}}},
	\label{ECE}
\end{equation}
where $N$ represents the total number of samples, $|B_{i}|$ represents the element count of $B_{i}$, ${\rm acc}(B_{i})$ represents the average accuracy on $B_{i}$, and ${\rm conf}(B_{i})$ represents the average confidence on $B_{i}$. Typically, the binning scheme is divided into equal width binning \cite{guo2017calibration, naeini2015obtaining} and equal mass binning \cite{kumar2019verified,zadrozny2001obtaining}. Recently, $Nixon\ et\ al.$ \cite{nixon2019measuring} and $Roelofs\ et\ al.$ \cite{roelofs2022mitigating} observed that $ECE_{bin}$ with equal mass binning produces more stable calibration effect. $ECE_{bin}$ is sensitive to the binning scheme \cite{kumar2019verified,nixon2019measuring}. Therefore, some improvements to $ECE_{bin}$ have been proposed. $Ferro$ and $Fricker$ \cite{ferro2012bias} and $Brocker$ \cite{brocker2012estimating} propose a debiased estimator, $ECE_{debiased}$, which employs a jackknife technique to estimate the per-bin bias in the standard $ECE_{bin}$. This bias is then subtracted to estimate the calibration error better. $Roelofs\ et\ al.$ \cite{roelofs2022mitigating} propose $ECE_{sweep}$, which introduces the monotonically increasing property of the calibration curve into $ECE_{bin}$.
\subsubsection{Binning-Free Calibration Metrics}
In recent years, confidence calibration evaluation methods that are not based on binning have also been proposed. $Gupta\ et\ al.$ \cite{gupta2020calibration} proposed $KS-error$, which uses the Kolmogorov-Smirnov statistical test to evaluate the calibration error. $Zhang\ et\ al.$ \cite{mixNmatch} and $Błasiok\ et\ al.$ \cite{blasioksmooth} propose smoothed Kernel Density Estimation (KDE) methods for evaluating calibration error. $Chidambaram\ et\ al.$ \cite{chidambaramflawed} smooth the logit and then use the smoothed logit to build calibration metric.
\subsection{Combining Prior with Experience}
In statistics, integrating prior distribution with experience data to estimate the distribution behind the data is a classic and practical tradition \cite{zellner1996models,lavine1991sensitivity}. Priors refer to initial inference on the form or value of model parameters before observing data. Experience refers to the knowledge a model learns from data. When there is enough data, the model can learn well from experience, and the role of the prior may not be reflected. However, when data size is insufficient, a well-informed prior is often more effective than experience data.

In confidence calibration, most existing calibration methods predominantly focus on how to estimate calibrated confidence from data, fit a user-defined calibration function on logit or confidence score, or use naive fitting (e.g., least square method, minimizes cross-entropy loss) to combine priors (e.g., beta prior \cite{kull2017beyond}, dirichlet prior \cite{kull2019beyond}) with experience. However, although fitting a user-defined calibration function may also imply some user-observed priors, such as the choice of function shape, these priors are too empirical, and their universality needs to be considered. In addition, naive fitting is prone to be overly affected by data with larger statistical biases (e.g., sparse data). Therefore, it is necessary to study a principled method that better integrates a well-informed prior distribution with empirical data to estimate the calibration curve. 
\section{Method}
In this section, the following questions are studied: 1) How to introduce priors to estimate calibration curve $P(H = 1|\hat S)$ better? 2) How to choose an appropriate prior? 3) How to build a calibration metric using the estimated calibration curve? 4) How about the theoretical properties of the proposed method?

In Section \ref{BPM}, to solve the first problem, the sampling process of the calibration data is modeled as a binomial process, and then the calibration curve can be estimated by maximizing the likelihood function of this binomial process. In Section \ref{SP}, a general and effective prior function family is suggested. In Section \ref{TCE_bpm}, a new calibration metric $TCE_{bpm}$ is proposed. Section \ref{TG} analyzes the theoretical guarantee of the proposed calibration method and metric.

\subsection{Estimating Calibration Curve}
\label{BPM}
\subsubsection{Binomial Process} 
For any fixed $\hat S$, the repeated sampling process of $H$ is a binomial distribution on $N_{^{\hat S}}^{pos}$, where $N_{^{\hat S}}^{pos} = \sum\nolimits_{i = 1}^{{N_{\hat S}}} {H_i^{(\hat S)}}$ represents the number of \enquote{hit}, ${H_i^{(\hat S)}}$ represents the $i$-th \enquote{hit} label at $\hat S$, ${{N_{\hat S}}}$ represents the number of samples at $\hat S$. Specifically, $N_{^{\hat S}}^{pos} \sim \mathcal{BI}({N_{\hat S}},P(H = 1|\hat S))$, where $\mathcal{BI(\cdot)}$ represents binomial distribution. Formally, the following equation is satisfied:
\begin{equation}
	\begin{array}{l}
		P(N_{\hat S}^{pos}) = {\cal B}{\cal I}({N_{\hat S}},P(H = 1|\hat S))\\
		= C_{{N_{\hat S}}}^{N_{\hat S}^{pos}}P{(H = 1|\hat S)^{N_{\hat S}^{pos}}}{(1 - P(H = 1|\hat S))^{{N_{\hat S}} - N_{\hat S}^{pos}}},
	\end{array}
\end{equation}
where $C_{{N_{\hat S}}}^{N_{\hat S}^{pos}}$ is the binomial coefficient. 

Furthermore, for all $\hat S \in [0,1]$, the repeated sampling process of $H$ is a binomial process. Binomial process is a random process \cite{grimmett2020probability} with binomial distribution in a continuous domain. Specifically, in the binomial process, the distribution of random variable $N_{^{\hat S}}^{pos}$ under every point $\hat S$ in the continuous domain $[0,1]$ is a binomial distribution. Since we are interested in $P(H=1|\hat S)$, $P(H=1|\hat S)$ is modeled as a prior function family $g(\hat S;\theta)$. Formally, the following equation is satisfied:
\begin{equation}
	\begin{array}{*{20}{l}}
		{P(N_{\hat S}^{pos}) = {\cal B}{\cal P}({N_{\hat S}},g(\hat S;\theta ))}\\
		{ = C_{{N}_{\hat S}}^{N_{\hat S}^{pos}}g{{(\hat S;\theta )}^{N_{\hat S}^{pos}}}{{(1 - g(\hat S;\theta ))}^{{N_{\hat S}} - N_{\hat S}^{pos}}}},
	\end{array}
\end{equation}
where $\mathcal{BP}(\cdot)$ represents binomial process, and $\hat S \in [0,1]$.

\subsubsection{Maximum Likelihood Estimation}
Our purpose is to estimate the calibration curve $g(\hat S;\theta)$. Usually, solving $g(\hat S;\theta)$ requires a combination of prior and experience. Prior defines that $g(\hat S;\theta)$ follows a function family of fixed parameter form or structure. Experience represents seeking the optimal parameter $\theta$ from the data. Maximum likelihood estimation is a classic and effective solution to this problem. Formally, the following equation needs to be solved:
\begin{equation}
	\mathop {{\rm{Argmax}}}\limits_\theta  P[D|g(\hat S;\theta )],
	\label{MHE}
\end{equation}
where $D$ is the calibration data set, and $D=\{ {\hat s_i},{h_i}\} _{1 \le i \le N}$. Equivalently, $D=\{(\hat s_{j},N_{\hat s_{j}},N_{\hat s_{j}}^{pos})\}_{1 \le j \le N^{'}}$, where $N^{'}$ is the number of sampling locations of $\hat S$. Therefore:
\begin{equation}
	\begin{array}{l}
		\mathop {{\rm{Argmax}}}\limits_\theta  P[D|g(\hat S;\theta )]\\
		= \mathop {{\rm{Argmax}}}\limits_\theta  {E_{\hat S \in D}}P[N_{\hat S}^{pos}|g(\hat S;\theta ),{N_{\hat S}}]\\
		= \mathop {{\rm{Argmax}}}\limits_\theta  {E_{\hat S \in D}}[{\cal B}{\cal P}({N_{\hat S}},g(\hat S;\theta ))]\\
		= \mathop {{\rm{Argmax}}}\limits_\theta  \sum\limits_{\hat S \in D} {{\cal B}{\cal P}({N_{\hat S}},g(\hat S;\theta ))P(\hat S)}.
	\end{array}
	\label{MHE}
\end{equation}
Therefore, Eq. \ref{MHE} can be solved just by knowing $P(\hat S)$. Typically, estimating $P(\hat S)$ requires discretization methods. In order to make $P(\hat S)$ as accurate as possible, the idea of bayesian averaging is adopted here. Formally, Eq. \ref{MHE} becomes that:
\begin{equation}
	\begin{array}{*{20}{c}}
		{\mathop {{\rm{Argmax}}}\limits_\theta  {E_B}[{E_b}[\mathcal{BP}({N_{{{\hat S}_b}}},g({{\hat S}_b};\theta ))|b]|B]},\\
		\Downarrow \\
		{\mathop {{\rm{Argmax}}}\limits_\theta  \sum\limits_{B \in \cal{B}} {P(B)\sum\limits_{b \in B} {P(b)} } \mathcal{BP}({N_{{{\hat S}_b}}},g({{\hat S}_b};\theta ))},
	\end{array}
	\label{MHE_bin}
\end{equation}
where $B \in \cal B$ represents $B$-th binning scheme in $D$, $\cal B$ represents the binning scheme space, $b$ represent the $b$-th bin in $B$, $\hat S_{b}$ represent the average confidence score in the $b$-th bin, ${N_{{{\hat S}_b}}}$ represent the sample size in the $b$-th bin. In this paper, a uniform prior for modeling $P(B)$ is used, and $P(b)=|b|/|D|$, where $|\cdot|$ the element count function.
\subsubsection{Equivalence Optimization} 
Directly solving Eq. \ref{MHE_bin} can obtain a feasible solution. However, due to the non-convexity of the likelihood function with respect to $g(\hat S;\theta )$ in Eq. \ref{MHE_bin}, the feasible solution may not be the global optimal solution. This non-global optimal solution will affect the estimation of the calibration curve. To solve this problem, an equivalent new optimization problem is proposed whose objective function is convex on $g(\hat S;\theta )$, as shown below:
\begin{equation}
	\mathop {{\rm{Argmin}}}\limits_\theta  \sum\limits_{B \in \cal{B}} {P(B)\sum\limits_{b \in B} {P(b)} }  \cdot {e^{{{(g({{\hat S}_b};\theta ) - \frac{{N_{{{\hat S}_b}}^{pos}}}{{{N_{{{\hat S}_b}}}}})^2}}}}.
	\label{equivalence}
\end{equation}
The proof of equivalence between Eq. \ref{MHE_bin} and Eq. \ref{equivalence} is provided in Appendix \ref{equivalence Solution}. In general, Eq. \ref{equivalence} can be solved well using a common optimizer (e.g., Gradient Descent \cite{andrychowicz2016learning}, Quasi-Newton Methods \cite{mokhtari2020stochastic}, Nelder-Mead \cite{nelder1965simplex}).

\subsection{Selecting Prior}
\label{SP}
The appropriate prior is crucial to solving Eq. \ref{equivalence}. If the prior function family is not correctly selected, the true calibration curve will not be in the feasible region (i.e., solution space). If the prior function family is correctly selected, a good calibration curve will be found quickly and efficiently. Next, this paper suggests a general and effective prior.

It is well known that the prior distribution of confidence scores can be modeled by beta distribution \cite{kull2017beyond,roelofs2022mitigating,kull2017beta}. Specifically, $P(\hat S|H=0)\sim {\cal B}eta(\alpha_{0},\beta_{0})$ and $P(\hat S|H=1)\sim {\cal B}eta(\alpha_{1},\beta_{1})$. By Bayes' theorem:
\begin{equation}
\begin{array}{*{20}{l}}
	{P(H = 1|\hat S) = \frac{{P(\hat S|H = 1)P(H = 1)}}{{P(\hat S|H = 1)P(H = 1) + P(\hat S|H = 0)P(H = 0)}}}\\
	{ = \frac{1}{{1 + \frac{{P(\hat S|H = 0)}}{{P(\hat S|H = 1)}} \cdot \frac{{P(H = 0)}}{{P(H = 1)}}}}}\\
	{ = \frac{1}{{1 + \frac{{beta{{({\alpha _0},{\beta _0})}^{ - 1}}{{\hat S}^{{\alpha _0} - 1}}{{(1 - \hat S)}^{{\beta _0} - 1}}}}{{beta{{({\alpha _1},{\beta _1})}^{ - 1}}{{\hat S}^{{\alpha _1} - 1}}{{(1 - \hat S)}^{{\beta _1} - 1}}}} \cdot \frac{{P(H = 0)}}{{P(H = 1)}}}}}\\
	{ = \frac{1}{{1 + \frac{{beta({\alpha _1},{\beta _1})}}{{beta({\alpha _0},{\beta _0})}} \cdot {{\hat S}^{{\alpha _0} - {\alpha _1}}}{{(1 - \hat S)}^{{\beta _0} - {\beta _1}}} \cdot \frac{{P(H = 0)}}{{P(H = 1)}}}}},
\end{array}
\label{prior_infer}
\end{equation}
where $beta(\cdot)$ represent beta function, ${\frac{{beta({\alpha _1},{\beta _1})}}{{beta({\alpha _0},{\beta _0})}} \cdot \frac{{P(H = 0)}}{{P(H = 1)}}}$ is a positive constant independent of $\hat S$. Let ${\frac{{beta({\alpha _1},{\beta _1})}}{{beta({\alpha _0},{\beta _0})}} \cdot\frac{{P(H = 0)}}{{P(H = 1)}}}$ be equal to $e^{c}$. In order to maintain the monotonically increasing of the calibration curve \cite{roelofs2022mitigating,blasioksmooth}, the following two constraints need to be satisfied:
\begin{equation}
	\left\{ {\begin{array}{*{20}{c}}
			{{\alpha _0} - {\alpha _1} \le 0},\\
			{{\beta _0} - {\beta _1} \ge 0}.
	\end{array}} \right.
\end{equation}
Therefore, $g(\hat S;\theta)$ can be selected as:
\begin{equation}
	\begin{array}{l}
		g(\hat S;\alpha,\beta,c) = \frac{1}{{1 + {{\hat S}^{-\alpha}}{{(1 - \hat S)}^{\beta}} \cdot {e^c}}},
	\end{array}
	\label{prior_function}
\end{equation}
where $\hat S \in [0,1]$, $g(\hat S;\alpha,\beta,c) \in [0,1]$, $\alpha \ge 0$, $\beta \ge 0$, $c \in (-\infty,+\infty)$.

To sum up, the computational steps for estimating the calibration curve $P(H = 1|\hat S)$ are shown in Algorithm \ref{Estimating calibration curve alg}.
\begin{algorithm}[t]
	\caption{Estimating calibration curve.}
	\begin{algorithmic}
		\STATE \textbf{Initialize:} 
		\STATE \hspace{0.5cm}$P(D|g)=0$; $D=\{ {\hat s_i},{h_i}\} _{1 \le i \le N}^N$; $\cal B$; $\alpha$; $\beta$; $c$.
		\STATE \textbf{for} $B$ in $\cal B$:
		\STATE \hspace{0.5cm} \textbf{Initialize} $P(D|g,B)=0.$
		\STATE \hspace{0.5cm} \textbf{for} $b$ in $B$:
		\STATE \hspace{0.5cm}\hspace{0.5cm} $\hat S_{list} = \{\hat s_{i} \in b| D\}$; $H_{list} = \{h_{i} \in b| D\}$,
		\STATE \hspace{0.5cm}\hspace{0.5cm} $\hat S_{b}={\mathop{\rm mean}\nolimits} (\hat S_{list})$,
		\STATE \hspace{0.5cm}\hspace{0.5cm} $N_{\hat S_b}^{pos}={\mathop{\rm sum}\nolimits} (H_{list})$; ${N_{{{\hat S}_b}}} = {\mathop{\rm len}\nolimits}(H_{list})$,
		\STATE \hspace{0.5cm}\hspace{0.5cm} $P_{b} = {e^{{{(g({{\hat S}_b};\alpha,\beta,c) - \frac{{N_{{{\hat S}_b}}^{pos}}}{{{N_{{{\hat S}_b}}}}})^2}}}}$,
		\STATE \hspace{0.5cm}\hspace{0.5cm} $P(D|g,B) = P(D|g,B)+P_{b}\cdot \frac{{|b|}}{{|D|}}$,
		\STATE \hspace{0.5cm} $P(D|g) = P(D|g) + P(D|g,B)$,
		\STATE  $\alpha,\beta,c = \mathop {{\rm{argmin}}}\limits_{{\alpha},{\beta},c} P(D|g)$,
		\STATE \textbf{Return} $P(H=1|\hat S) = \frac{1}{{1 + {{\hat S}^{-\alpha}}{{(1 - \hat S)}^{\beta}} \cdot {e^c}}}$.
	\end{algorithmic}
	\label{Estimating calibration curve alg}
\end{algorithm}
\subsection{Estimating TCE}
\label{TCE_bpm}
As can be seen from Algorithm \ref{Estimating calibration curve alg}, the estimated calibration curve is a continuous function. This allows us to estimate the true calibration error (see Eq. \ref{TCE}) instead of just the expected calibration error (see Eq. \ref{ECE}).

According to Eq. \ref{TCE}, the calculation formula of TCE is:
\begin{equation}
	TCE = {[\int_0^1 {|P(H = 1|\hat S) - \hat S{|^p}} \xi (\hat S)d\hat S]^{\frac{1}{p}}},
\end{equation}
where $\xi (\hat S)$ is the probability density function of $\hat S$. Typically, $p$ can be set to 1. Since it is already known that the prior distribution of $\xi (\hat S)$ is the beta distribution, the parameters of the beta distribution can be estimated using the moment estimation method \cite{wang2006topics}, as shown in the Estimating TCE part of Algorithm \ref{Estimating TCE}. Therefore, TCE can be estimated by computing a definite integral on the interval [0,1], as shown in Algorithm \ref{Estimating TCE}.

\begin{algorithm}[t]
	\caption{Estimating TCE.}
	\begin{algorithmic}
		\STATE  $P(H=1|\hat S) = g(\hat S;\theta)$ from Algorithm \ref{Estimating calibration curve alg}.
		\STATE  $m={\mathop{\rm mean}\nolimits}(\{\hat s_{i}|\hat s_{i} \in D \})$,
		\STATE  $v={\mathop{\rm var}\nolimits}(\{\hat s_{i}|\hat s_{i} \in D \})$,
		\STATE  $a_{1} = \frac{{{m^2}(1 - m)}}{v} - m$,
		\STATE  $a_{2} = a_{1}  \cdot \frac{{(1 - m)}}{m}$,
		\STATE  $\xi(\hat S) = beta(a_{1},a_{2})^{-1} \cdot {\hat S}^{(a_{1}-1)}\cdot (1-\hat S)^{a_{2}-1}$,
		\STATE  \textbf{Return} $TCE_{bpm} = \int_0^1 {|P(H=1|\hat S) - \hat S| \cdot \xi(\hat S)d} \hat S$.
	\end{algorithmic}
	\label{Estimating TCE}
\end{algorithm}

\subsection{Theoretical Guarantee}
\label{TG}
\subsubsection{Continuity}
Continuity with respect to data distribution is an important property for a calibration method and metric. It tells us whether a slight change in data distribution will lead to a drastic jump in the calibration curve and the calibration metric. Before conducting continuity analysis, a distance measure of the data distributions needs to be defined. It tells us how far away the two distributions are. In this paper, Wasserstein distance is used, as shown in Definition \ref{Wasserstein}. Wasserstein distance measures the minimum cost of transforming one distribution into another.
\begin{definition}
	\textnormal{\textbf{(Wasserstein distance)}} For two data distribution $D_{1},D_{2}$ over $[0,1] \times \{ 0,1\}$, let $\Gamma$ be the family of all couplings of distributions $D_{1}$ and $D_{2}$, Wasserstein distance is defined as follows:
	\begin{equation}
		W({D_1},{D_2}) = \mathop {\inf }\limits_{\gamma  \in \Gamma } \mathop E\limits_{({{\hat S}_1},{{\hat S}_2}) \sim \gamma } [|\frac{{N_{{{\hat S}_1}}^{pos}}}{{{N_{{{\hat S}_1}}}}} - \frac{{N_{{{\hat S}_2}}^{pos}}}{{{N_{{{\hat S}_2}}}}}| + |{\hat S_1} - {\hat S_2}|].
	\end{equation}
	Specially, when $N_{\hat S_{1}}=N_{\hat S_{2}}=1$, then:
	\begin{equation}
		W({D_1},{D_2}) = \mathop {\inf }\limits_{\gamma  \in \Gamma } \mathop E\limits_{({{\hat S}_1},{{\hat S}_2}) \sim \gamma } [|{H_1} - {H_2}| + |{{\hat S}_1} - {{\hat S}_2}|].
	\end{equation}
	\label{Wasserstein}
\end{definition}
Next, this paper first proves that $g(\hat S;\theta)$ obtained by Algorithm \ref{Estimating calibration curve alg} is Lipschitz continuity $w.r.t.$ data distributions, as shown in Theorem \ref{calibration_curve_lipschitz}. Then, this paper proves that $TCE_{bpm}$ is Lipschitz continuity $w.r.t.$ data distribution when certain conditions are met, as shown in Theorem \ref{lipschitz}. The proofs of Theorem \ref{calibration_curve_lipschitz} and Theorem \ref{lipschitz} are given in Appendix \ref{lipschitz proof}.
\begin{theorem}
	For two distribution $D_{1},D_{2}$ over $[0,1] \times \{ 0,1\}$, let $\Gamma$ be the family of all couplings of distributions $D_{1}$ and $D_{2}$, and $g({{\hat S}};{\theta _{{D}}})$ represents the calibration curve learned from $D$ via Eq. \ref{equivalence}, then $\forall \gamma \in \Gamma$, it holds that:
	\begin{equation}
		\begin{array}{l}
			\mathop {{\rm{ }}E}\limits_{({{\hat S}_1},{{\hat S}_2}) \sim \gamma } |g({{\hat S}_1};{\theta _{{D_1}}}) - g({{\hat S}_2};{\theta _{{D_2}}})|\\
			\le L \cdot \mathop E\limits_{({{\hat S}_1},{{\hat S}_2}) \sim \gamma } [|\frac{{N_{{{\hat S}_1}}^{pos}}}{{{N_{{{\hat S}_1}}}}} - \frac{{N_{{{\hat S}_2}}^{pos}}}{{{N_{{{\hat S}_2}}}}}| + |{{\hat S}_1} - {{\hat S}_2}|],
		\end{array}
	\end{equation}
	where $L \ge 0$. Therefore:
	\begin{equation}
		\mathop {\inf }\limits_{\gamma  \in \Gamma } \mathop {{\rm{ }}E}\limits_{({{\hat S}_1},{{\hat S}_2}) \sim \gamma } |g({{\hat S}_1};{\theta _{{D_1}}}) - g({{\hat S}_2};{\theta _{{D_2}}})| \le L \cdot W({D_1},{D_2}).
	\end{equation}
	\label{calibration_curve_lipschitz}
\end{theorem}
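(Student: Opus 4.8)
The plan is to establish a pointwise bound on $|g(\hat S_1;\theta_{D_1})-g(\hat S_2;\theta_{D_2})|$ and then integrate it against an arbitrary coupling $\gamma\in\Gamma$. First I would split by the triangle inequality, $|g(\hat S_1;\theta_{D_1})-g(\hat S_2;\theta_{D_2})|\le|g(\hat S_1;\theta_{D_1})-g(\hat S_2;\theta_{D_1})|+|g(\hat S_2;\theta_{D_1})-g(\hat S_2;\theta_{D_2})|$, handling a ``spatial'' term and a ``parameter'' term separately. On the effective domain of predicted-class confidences, which satisfies $\hat S\ge 1/K$ (and stays below $1$ in practice), and over the compact parameter domain $\Theta$ in which Algorithm~\ref{Estimating calibration curve alg} operates, the map $g(\hat S;\alpha,\beta,c)=(1+\hat S^{-\alpha}(1-\hat S)^{\beta}e^{c})^{-1}$ is $C^1$ with uniformly bounded partials, say $|\partial g/\partial\hat S|\le L_{\hat S}$ and $\|\partial g/\partial\theta\|\le L_\theta$. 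Hence the spatial term is $\le L_{\hat S}|\hat S_1-\hat S_2|$ and the parameter term is $\le L_\theta\|\theta_{D_1}-\theta_{D_2}\|$. Taking $\mathbb E_\gamma$ — the parameter term being constant in $\gamma$ — the theorem reduces to showing that the minimizer map $D\mapsto\theta_D$ is uniformly Lipschitz: there is $L_\Theta\ge 0$, independent of $D_1,D_2$, with $\|\theta_{D_1}-\theta_{D_2}\|\le L_\Theta\,W(D_1,D_2)$. Granting this, since the Wasserstein distance is an infimum, $\|\theta_{D_1}-\theta_{D_2}\|\le L_\Theta\,\mathbb E_\gamma[\,|N_{\hat S_1}^{pos}/N_{\hat S_1}-N_{\hat S_2}^{pos}/N_{\hat S_2}|+|\hat S_1-\hat S_2|\,]$ for every $\gamma$, so the claimed bound holds with $L=L_{\hat S}+L_\theta L_\Theta$, and its $\inf$ form follows by taking $\inf_\gamma$ on both sides.

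For the reduced claim I would exploit the convex surrogate of Eq.~\ref{equivalence}. Write $F_D(\theta)=\sum_{B\in\mathcal B}P(B)\sum_{b\in B}P(b)\exp((g(\hat S_b;\theta)-r_b)^2)$ with $r_b=N_{\hat S_b}^{pos}/N_{\hat S_b}$; since the binning prior $P(B)$ is fixed and $D$-independent and $P(b)=|b|/|D|$ is a relative mass, $F_D$ is a fixed convex combination of elementary terms, each depending on $D$ only through a triple $(\hat S_b,r_b,P(b))$. Taking a coupling $\gamma^\ast$ nearly attaining $W(D_1,D_2)$ and matching bins of $D_1$ with those of $D_2$ along it, the difference of the corresponding $\theta$-gradients is controlled — using smoothness of $u\mapsto e^{(u-r)^2}$ and of $\theta\mapsto g(\hat S_b;\theta)$ on $\Theta$ — by $(|\hat S_{b_1}-\hat S_{b_2}|+|r_{b_1}-r_{b_2}|)$ plus a bin-mass-mismatch contribution; summing gives $\sup_{\theta\in\Theta}\|\nabla F_{D_1}(\theta)-\nabla F_{D_2}(\theta)\|\le c\,W(D_1,D_2)$ for a universal $c$. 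If moreover $F_D$ has uniformly sharp minima, i.e. $F_D(\theta)-F_D(\theta_D)\ge\mu\|\theta-\theta_D\|^2$ on $\Theta$ for a fixed $\mu>0$ — equivalently, the Jacobian of $\theta\mapsto(g(\hat S_b;\theta))_b$ has least singular value bounded below on $\Theta$, which needs roughly three well-spread design points, mirroring the three free parameters $(\alpha,\beta,c)$ (and matching the $3/B$ sample-size statement) — then the standard perturbation bound for minimizers of parametrized families yields $\|\theta_{D_1}-\theta_{D_2}\|\le (c/(2\mu))\,W(D_1,D_2)$, i.e. $L_\Theta=c/(2\mu)$, closing the argument.

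The hard part will be exactly this last ingredient, the $D$-uniform curvature/identifiability bound: because $\theta\mapsto g(\hat S;\theta)$ is nonlinear in $(\alpha,\beta,c)$, convexity of the surrogate in $g$ does not transfer to convexity in $\theta$, and for degenerate $D$ — confidence mass concentrated near one location, or a near-affine calibration pattern — the minimizer may be non-unique, ill-conditioned, or not even attained in the unconstrained feasible set, breaking a uniform $L_\Theta$. I would address this either by (i) working on the compact box $\Theta$ actually searched by the optimizer and adding a nondegeneracy assumption that the design points keep the least singular value of the Jacobian bounded below there, or by (ii) dropping uniqueness and proving Lipschitz continuity of the set-valued solution map in Hausdorff distance via a Berge-type parametric-programming stability theorem, with $\theta_D$ taken as any measurable selection. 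A more routine but still required care point is the endpoint behavior of $g$: for $\alpha<1$ or $\beta<1$ the powers $\hat S^{-\alpha}$, $(1-\hat S)^{\beta}$ are only Hölder at $0$, $1$, so $L_{\hat S}$ is finite only away from the endpoints; this is absorbed by the a priori bound $\hat S\ge 1/K$ and, near $1$, by a mild lower bound on $\beta$ (the monotonicity constraint already gives $\beta\ge 0$) or by restricting to $[1/K,1-\epsilon]$ and carrying the Hölder modulus through.
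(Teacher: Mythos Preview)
Your route is sound and genuinely different from the paper's. The paper never passes through parameter space: it works pointwise with the scalar surrogate $g_p(g)=e^{(g-r)^2}$, $r=N_{\hat S}^{pos}/N_{\hat S}$, and argues the two-step chain
\[
|g_1-g_2|\;\le\;L_1\,\bigl|\partial_g g_p|_{1}-\partial_g g_p|_{2}\bigr|\;\le\;L_1L_2\,|r_1-r_2|,
\]
the first inequality coming from strong convexity of $g_p$ in $g$ (so $g\mapsto\partial_g g_p$ is bi-Lipschitz), the second from smoothness of $\partial_g g_p$ in $r$; the $|\hat S_1-\hat S_2|$ term is then tacked on for free. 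You instead split $|g(\hat S_1;\theta_{D_1})-g(\hat S_2;\theta_{D_2})|$ into a spatial piece and a parameter piece and reduce to Lipschitz stability of the argmin map $D\mapsto\theta_D$, which you establish by the standard gradient-perturbation plus quadratic-growth argument. What each buys: the paper's argument is a few lines and completely sidesteps the identifiability/uniform-curvature hypothesis you flag as the hard part, but it is terse to the point of glossing over why the pointwise derivative $\partial_g g_p$ at the \emph{learned} $\theta_D$ is controlled (the first-order condition for Eq.~\ref{equivalence} only zeros the $\theta$-gradient of a weighted sum, not $\partial_g g_p$ bin by bin, and the map $\partial_g g_p$ depends jointly on $g$ and $r$, not separately). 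Your approach is longer but makes the needed structural assumption explicit; your remark that a Jacobian lower bound requires roughly three well-spread design points is exactly what the paper later formalizes as Lemma~\ref{lemma1} in the sample-efficiency proof, so the ingredient you isolate is indeed present elsewhere in the paper even though it is not invoked in the paper's proof of this theorem. Your caveats about endpoint regularity and possible non-uniqueness of $\theta_D$ are not addressed in the paper.
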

\begin{theorem}
	$\forall \hat S \in [0,1]$, if $g(\hat S; \theta_{D})$ and $\xi_{D}(\hat S)$ are Lipschitz continuous $w.r.t.$ $D$, then for two distribution $D_{1},D_{2}$ over $[0,1] \times \{ 0,1\}$, $TCE_{bpm}$ satisfies:
	\begin{equation}
		|TC{E_{bpm}}({D_1}) - TC{E_{bpm}}({D_2})| \le L \cdot W({D_1},{D_2}),
	\end{equation}
	where $L \ge 0$.
	\label{lipschitz}
\end{theorem}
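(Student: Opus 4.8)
The plan is to reduce the bound on $|TCE_{bpm}(D_1) - TCE_{bpm}(D_2)|$ to the two assumed Lipschitz properties by a single cross-term decomposition of the integral that defines $TCE_{bpm}$. Writing $TCE_{bpm}(D_j) = \int_0^1 |g(\hat S;\theta_{D_j}) - \hat S|\,\xi_{D_j}(\hat S)\,d\hat S$ for $j=1,2$ and inserting the cross term $\int_0^1 |g(\hat S;\theta_{D_1}) - \hat S|\,\xi_{D_2}(\hat S)\,d\hat S$, the triangle inequality gives
\[
\begin{array}{l}
|TCE_{bpm}(D_1) - TCE_{bpm}(D_2)| \\[1mm]
\le \displaystyle\int_0^1 \big| |g(\hat S;\theta_{D_1}) - \hat S| - |g(\hat S;\theta_{D_2}) - \hat S| \big|\, \xi_{D_2}(\hat S)\, d\hat S \\[1mm]
\quad + \displaystyle\int_0^1 |g(\hat S;\theta_{D_1}) - \hat S|\cdot |\xi_{D_1}(\hat S) - \xi_{D_2}(\hat S)|\, d\hat S.
\end{array}
\]

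First I would bound the first integral. The reverse triangle inequality $\big||x|-|y|\big|\le|x-y|$ applied with $x=g(\hat S;\theta_{D_1})-\hat S$, $y=g(\hat S;\theta_{D_2})-\hat S$ yields $\big| |g(\hat S;\theta_{D_1}) - \hat S| - |g(\hat S;\theta_{D_2}) - \hat S| \big| \le |g(\hat S;\theta_{D_1}) - g(\hat S;\theta_{D_2})|$, which by the assumed Lipschitz continuity of $g(\hat S;\cdot)$ w.r.t.\ $D$ (the content of Theorem \ref{calibration_curve_lipschitz}) is at most $L_g\,W(D_1,D_2)$ uniformly in $\hat S$. Since $\xi_{D_2}$ is a probability density on $[0,1]$, i.e.\ $\int_0^1 \xi_{D_2}(\hat S)\,d\hat S = 1$, the first integral is $\le L_g\,W(D_1,D_2)$. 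For the second integral I would use $|g(\hat S;\theta_{D_1}) - \hat S| \le 1$ (both lie in $[0,1]$) together with the assumed Lipschitz continuity of $\xi_D$ w.r.t.\ $D$ in the $L^1([0,1])$ sense, $\int_0^1 |\xi_{D_1}(\hat S) - \xi_{D_2}(\hat S)|\,d\hat S \le L_\xi\,W(D_1,D_2)$, to get $\le L_\xi\,W(D_1,D_2)$. Summing the two bounds gives $|TCE_{bpm}(D_1) - TCE_{bpm}(D_2)| \le (L_g + L_\xi)\,W(D_1,D_2)$, so the claim holds with $L = L_g + L_\xi \ge 0$.

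The main obstacle is justifying, rather than merely assuming, the Lipschitz hypothesis on $\xi_D$. In Algorithm \ref{Estimating TCE} the density $\xi_D$ is built by moment matching: the sample mean $m$ and variance $v$ of the confidence scores determine the beta parameters $a_1,a_2$, which in turn determine $\xi_D$. Turning the hypothesis into a theorem requires checking (i) that $m$ and $v$ are Lipschitz functions of $D$ under $W$; (ii) that $(m,v)\mapsto(a_1,a_2)$ is locally Lipschitz, which needs $v$ bounded away from $0$ (non-degenerate confidence scores) and the degenerate region of $(m,v)$ excluded; and (iii) that the beta density depends Lipschitz-continuously on $(a_1,a_2)$ in $L^1([0,1])$. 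Step (iii) is the delicate one, because when $a_1<1$ or $a_2<1$ the beta density is unbounded at an endpoint, so a \emph{pointwise} Lipschitz bound for $\xi_D$ fails; phrasing the hypothesis in the $L^1$/total-variation norm — which is exactly what the decomposition above uses, via $|g-\hat S|\le 1$ — sidesteps this. Finally, the argument is specific to the $p=1$ form of $TCE_{bpm}$ returned by Algorithm \ref{Estimating TCE}; for $p>1$ the outer map $x\mapsto x^{1/p}$ is only Hölder near $0$, so an extra ingredient (a lower bound on $TCE_{bpm}$, or accepting a Hölder-type modulus of continuity) would be needed.
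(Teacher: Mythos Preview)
Your proof is correct and follows essentially the same route as the paper: both insert the cross term $\int_0^1 |g(\hat S;\theta_{D_1})-\hat S|\,\xi_{D_2}(\hat S)\,d\hat S$, apply the triangle and reverse triangle inequalities, then bound the two resulting pieces using $|g-\hat S|\le 1$, $\int_0^1 \xi_{D_2}=1$, and the two Lipschitz hypotheses, arriving at $L=L_g+L_\xi$. The only minor difference is that the paper reads the hypothesis on $\xi_D$ pointwise, i.e.\ $|\xi_{D_1}(\hat S)-\xi_{D_2}(\hat S)|\le L_2\,W(D_1,D_2)$ for every $\hat S$, and then integrates, whereas you phrase it directly in the $L^1$ sense; your version is the weaker (hence easier to verify) assumption and is exactly what the decomposition needs, so this is a small improvement rather than a gap. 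Your closing remarks about the difficulty of actually checking the $\xi_D$ hypothesis for moment-matched beta densities, and about the $p>1$ case, go beyond what the paper addresses.
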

\subsubsection{Consistency}
To theoretically prove the effectiveness of a calibration metric, $Błasiok\ et\ al.$ \cite{blasiok2023unifying} have proposed a unified theoretical framework: consistent calibration measure. Consistent calibration measure means two things: 1) When the true distance to calibration is small, the calibration metric should also be small(i.e., Robust completeness); 2) When the true distance to calibration is large, the calibration metric should also be large (i.e., Robust soundness). 

Before defining the consistent calibration measure, we need to define how far a data distribution $D$ is from its nearest perfect calibration distribution, as shown in the Definition \ref{True distance to calibration}. Then, the consistent calibration measure can be defined as shown in Definition \ref{Consistency}.

Theorem \ref{Consistenty} proves that $TCE_{bpm}$ is a consistent calibration measure when certain conditions are met. Corollary \ref{corollary} proves that when $P(\hat S|H=0)$ and $P(\hat S|H=1)$ follow beta distribution, $TCE_{bpm}$ calculated using Eq. \ref{prior_function} is a consistent calibration measure. The proof of Theorem \ref{Consistenty} is given in Appendix \ref{Consistenty proof}, and the proof of Corollary \ref{corollary} is given in Appendix \ref{Corollary proof}.
\begin{definition}
	\textnormal{\textbf{(True distance to calibration)}}
	$\forall D$ over $[0,1] \times \{ 0,1\}$, let $\cal{P}$ be the family of all perfectly calibrated distributions, the true distance to calibration is:
	\begin{equation}
		\underline{dCE}(D) = \mathop {\inf }\limits_{\mathcal{D} \in \mathcal{P}} W(D,\mathcal{D}).
	\end{equation}
	
	\label{True distance to calibration}
\end{definition}

\begin{definition}
	\textnormal{\textbf{(Consistent calibration measure)}}
	For $q,t,L_{1},L_{2} > 0$, calibration metric $\mu$, and data distribution $D$ over $[0,1]\times\{0,1\}$, if:
	\begin{equation}
		\mu (D) \le {L_1} \cdot {({\underline{dCE}(D)})^q},
	\end{equation}
	then $\mu$ satisfies $q$-robust completeness. If:
	\begin{equation}
		\mu (D) \ge {L_2} \cdot {({\underline{dCE}(D)})^t},
	\end{equation}
	then $\mu$ satisfies $t$-robust soundness. If satisfying robust completeness and soundness, $\mu$ is a consistent calibration measure.
	\label{Consistency}
\end{definition}

\begin{theorem}
$TCE_{bpm}$ is a consistent calibration measure if the following two conditions hold:
\begin{itemize}
\item The hypothesis set $\cal{G}$ (The set of all possible $g(\hat S; \theta_{D})$) includes the true calibration curve;
	
\item $\forall \hat S \in [0,1]$, if $g(\hat S; \theta_{D}) \in \cal{G}$ and $\xi_{D}(\hat S)$ are Lipschitz continuous $w.r.t.$ $D$.
\end{itemize}
\label{Consistenty}
\end{theorem}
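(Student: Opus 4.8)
The plan is to verify the two requirements of Definition~\ref{Consistency} separately, in each case with exponent $1$, the pivot being that $TCE_{bpm}$ is identically zero on the family $\mathcal{P}$ of perfectly calibrated distributions. First I would record why the population version of the estimator is consistent: writing $c_D(\hat s):=P(H=1\mid\hat S=\hat s)$ for the true calibration curve, in the large-sample limit each bin frequency $N^{pos}_{\hat S_b}/N_{\hat S_b}$ converges to the conditional hit rate on bin $b$, and since $u\mapsto e^{u^2}$ has its unique minimum at $u=0$, the weighted objective in Eq.~\ref{equivalence} is minimized exactly when $g(\cdot;\theta)$ matches those rates at every bin centroid; averaging over the binning-scheme space $\mathcal{B}$, which contains arbitrarily fine binnings, together with continuity forces $g(\cdot;\theta_D)\equiv c_D$ whenever $c_D\in\mathcal{G}$ (condition~1). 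In particular, for $\mathcal{D}\in\mathcal{P}$ the true curve is the identity (Definition~\ref{top_calibration}), which condition~1 places in $\mathcal{G}$, so $g(\hat s;\theta_{\mathcal{D}})=\hat s$ and $TCE_{bpm}(\mathcal{D})=\int_0^1|\hat s-\hat s|\,\xi_{\mathcal{D}}(\hat s)\,d\hat s=0$.

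For robust completeness, I would invoke Theorem~\ref{lipschitz}: condition~2 is precisely its hypothesis (the $g$-Lipschitz part already following from Theorem~\ref{calibration_curve_lipschitz}), so $TCE_{bpm}$ is $L$-Lipschitz in the Wasserstein distance. Combining this with the vanishing just established, $TCE_{bpm}(D)=|TCE_{bpm}(D)-TCE_{bpm}(\mathcal{D})|\le L\cdot W(D,\mathcal{D})$ for every $\mathcal{D}\in\mathcal{P}$; since the left-hand side is independent of $\mathcal{D}$, taking the infimum over $\mathcal{P}$ and using Definition~\ref{True distance to calibration} gives $TCE_{bpm}(D)\le L\cdot\underline{dCE}(D)$, i.e. $1$-robust completeness with $L_1=L$.

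For robust soundness, I would exhibit a concrete perfectly calibrated distribution near $D$: push $D$ forward along $(\hat S,H)\mapsto(g(\hat S;\theta_D),H)$ to obtain $\mathcal{D}'$. Identifying $g(\cdot;\theta_D)$ with $c_D$ from the consistency step and applying the tower property, $P_{\mathcal{D}'}(H=1\mid\hat S'=t)=E_D[c_D(\hat S)\mid c_D(\hat S)=t]=t$, so $\mathcal{D}'\in\mathcal{P}$. The coupling that keeps $H$ fixed and transports $\hat S$ to $g(\hat S;\theta_D)$ then yields $\underline{dCE}(D)\le W(D,\mathcal{D}')\le E_{\hat S}\big[\,|g(\hat S;\theta_D)-\hat S|\,\big]$, and the right-hand side equals $TCE_{bpm}(D)$ once the density $\xi_D$ of Algorithm~\ref{Estimating TCE} is identified with the true law of $\hat S$. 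Hence $TCE_{bpm}(D)\ge\underline{dCE}(D)$, which is $1$-robust soundness with $L_2=1$; combined with the previous paragraph, $TCE_{bpm}$ is a consistent calibration measure.

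The hard part will be the two identifications used above. First, establishing that the binned, Bayesian-averaged maximum-likelihood procedure is genuinely population-consistent for $c_D$ requires a sieve-style control of the bias introduced by finite bins as they shrink; I expect this, rather than the Wasserstein manipulations, to be the technical core. Second, $TCE_{bpm}$ integrates against the moment-matched beta density $\xi_D$ rather than the true density of $\hat S$, so the final identification in the soundness step (and, through Theorem~\ref{lipschitz}, the completeness bound as well) needs either the beta-prior hypothesis of Corollary~\ref{corollary} or a uniform bound on the Radon--Nikodym derivative between $\xi_D$ and the true law; absent such a condition one only recovers soundness up to that approximation factor, so the cleanest route is to carry the beta-prior assumption through or to state the conclusion relative to $\xi_D$.
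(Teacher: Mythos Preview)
Your proposal is correct and follows essentially the same architecture as the paper: robust completeness via the Lipschitz bound of Theorem~\ref{lipschitz} together with $TCE_{bpm}\equiv 0$ on $\mathcal{P}$, and robust soundness by pushing $D$ forward along $\hat S\mapsto c_D(\hat S)$ (keeping $H$) to land in $\mathcal{P}$ and reading off $\underline{dCE}(D)\le E[|c_D(\hat S)-\hat S|]$ from that coupling.

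The one methodological difference worth noting is how the identification $g(\cdot;\theta_D)\approx c_D$ is handled. You argue it at the population level (the convex objective has its unique minimum where $g$ matches the bin rates, and refining binnings plus continuity forces $g=c_D$), and then flag the finite-bin bias as the hard part. The paper instead keeps a finite-sample $\varepsilon$ throughout: it bounds $E_{\hat S}|g(\hat S;\theta_D)-g(\hat S;\theta^*)|$ via the data-Lipschitz property (Theorem~\ref{calibration_curve_lipschitz}) and then Hoeffding on each bin, obtaining $TCE_{bpm}(\mathcal{D})\le L\varepsilon$ and $TCE_{bpm}(D)\ge \underline{dCE}(D)-L|\varepsilon|$ rather than exact equalities. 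Your route is cleaner and more transparent about where the assumptions bite; the paper's route makes the dependence on sample size explicit but carries the nuisance $\varepsilon$ through both halves. Your closing caveat about $\xi_D$ versus the true law of $\hat S$ is well taken---the paper silently makes that identification in its Eq.~\eqref{soundness2}, so you are being more careful than the original on that point.
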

\begin{corollary}
$\forall \hat S \in [0,1]$, if:
$$
g(\hat S;\alpha,\beta,c) = \frac{1}{{1 + {{\hat S}^{-\alpha}}{{(1 - \hat S)}^{\beta}} \cdot {e^c}}},
$$
and $\xi_{D}(\hat S)$ are Lipschitz continuous $w.r.t.$ $D$, and $P(\hat S|H=0)$ and $P(\hat S|H=1)$ follow beta distribution, $TCE_{bpm}$ is a consistent calibration measure.
\label{corollary}
\end{corollary}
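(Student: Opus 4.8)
The plan is to deduce the corollary directly from Theorem~\ref{Consistenty} by checking that its two hypotheses hold whenever $P(\hat S|H=0)$ and $P(\hat S|H=1)$ are beta distributed and $\xi_D$ is Lipschitz. First I would verify that the hypothesis set $\mathcal{G}=\{\,g(\hat S;\alpha,\beta,c):\alpha\ge 0,\ \beta\ge 0,\ c\in(-\infty,+\infty)\,\}$ contains the true calibration curve. This is essentially a restatement of the Bayes' theorem computation in Eq.~\ref{prior_infer}: writing $P(\hat S|H=0)\sim\mathcal{B}eta(\alpha_0,\beta_0)$ and $P(\hat S|H=1)\sim\mathcal{B}eta(\alpha_1,\beta_1)$, that computation yields
\begin{equation}
P(H=1|\hat S)=\frac{1}{1+\hat S^{\,\alpha_0-\alpha_1}(1-\hat S)^{\,\beta_0-\beta_1}\,e^{c}},
\end{equation}
with the positive constant $e^{c}=\frac{beta(\alpha_1,\beta_1)}{beta(\alpha_0,\beta_0)}\cdot\frac{P(H=0)}{P(H=1)}$. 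Setting $\alpha:=\alpha_1-\alpha_0$ and $\beta:=\beta_0-\beta_1$ identifies the right-hand side with $g(\hat S;\alpha,\beta,c)$, and the sign constraints $\alpha\ge 0,\ \beta\ge 0$ are exactly the conditions (used throughout the paper, following \cite{roelofs2022mitigating,blasioksmooth}) that keep the calibration curve monotonically increasing. Under that standard monotonicity assumption the true curve therefore lies in $\mathcal{G}$, which is the first hypothesis of Theorem~\ref{Consistenty}.

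Next I would check the second hypothesis, namely that both $D\mapsto g(\hat S;\theta_D)$ and $D\mapsto\xi_D(\hat S)$ are Lipschitz continuous with respect to the data distribution. Lipschitz continuity of $g(\hat S;\theta_D)$ is precisely the content of Theorem~\ref{calibration_curve_lipschitz}, which I may assume; moreover, since Algorithm~\ref{Estimating calibration curve alg} solves Eq.~\ref{equivalence} over the prior family Eq.~\ref{prior_function}, the learned curve is automatically a member of $\mathcal{G}$, so the membership requirement $g(\hat S;\theta_D)\in\mathcal{G}$ holds by construction. Lipschitz continuity of $\xi_D$ is granted as a hypothesis of the corollary, so nothing further is required (one may optionally remark that, given the beta assumption on $\xi$, it follows from the continuity of the moment estimates $a_1,a_2$ in Algorithm~\ref{Estimating TCE} as long as the sample variance $v$ stays bounded away from $0$). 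With both hypotheses of Theorem~\ref{Consistenty} verified, I would simply invoke that theorem to conclude that $TCE_{bpm}$ satisfies robust completeness and robust soundness, i.e., it is a consistent calibration measure.

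The main obstacle I anticipate lies not in the logical skeleton but in the fine print of membership and degeneracy. One must confirm that the reparametrization $(\alpha_0,\beta_0,\alpha_1,\beta_1)\mapsto(\alpha,\beta,c)$ genuinely lands in the admissible region for every beta pair compatible with a monotone calibration curve, including the boundary behaviour as $\hat S\to 0^{+}$ or $\hat S\to 1^{-}$, where $\hat S^{-\alpha}$ or $(1-\hat S)^{\beta}$ may diverge or vanish, and that the Lipschitz constant furnished by Theorem~\ref{calibration_curve_lipschitz} does not degenerate on this particular family. A secondary subtlety worth making explicit is the status of the monotonicity assumption: if one declines to assume it, the first hypothesis of Theorem~\ref{Consistenty} can fail, so the cleanest route is to restrict attention to beta parameters with $\alpha_1\ge\alpha_0$ and $\beta_0\ge\beta_1$ (equivalently, to assume the true calibration curve is monotonically increasing).
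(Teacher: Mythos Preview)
Your proposal is correct and matches the paper's own proof essentially line for line: the paper also verifies, via the Bayes' theorem computation in Eq.~\ref{prior_infer}, that the beta assumption forces the true calibration curve to lie in $\mathcal{G}$, and then invokes Theorem~\ref{Consistenty}. Your write-up is in fact more thorough than the paper's (which is two sentences), since you explicitly note that Lipschitz continuity of $g$ comes from Theorem~\ref{calibration_curve_lipschitz} and flag the monotonicity and boundary caveats the paper leaves implicit.
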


\subsubsection{Sample Efficiency}
Sample efficiency tells us how many samples a method requires to keep the error small enough. According to Hoeffding’s inequality, the sample size required for histogram binning is $N\ge \frac{B\cdot \rm{In}(1/\delta)}{2\varepsilon^{2}}$, where $B$ represents the number of bins. Specifically, for the error of the histogram binning to be lower than $\varepsilon$, $\frac{{{\rm{In}}({\rm{1/}}\delta )}}{{2{\varepsilon ^2}}}$ samples are required at each bin. Next, we will prove that the sample efficiency of Algorithm \ref{Estimating calibration curve alg} is $N\ge \frac{3\cdot \rm{In}(1/\delta)}{2\varepsilon^{2}}$, as shown in Theorem \ref{efficiency}. This is due to the fact that the prior function of the calibration curve has three parameters, which ensures that the calibration curve can be uniquely determined by three distinct observation points. Therefore, in theory, as long as selecting three most representative bins, a good calibration curve can be estimated by Algorithm \ref{Estimating calibration curve alg}. The proof of Theorem \ref{efficiency} is given in Appendix \ref{efficiency_proof}.
\begin{theorem}
For the function family in Eq. \ref{prior_function}, if $P(\hat S|H=0)$ and $P(\hat S|H=1)$ follow beta distribution, then $\forall \delta \in (0,1)$, when $N\ge \frac{3\cdot \rm{In}(1/\delta)}{2\varepsilon^{2}}$, satisfy the following result with $1-\delta$ probability:
$$
E_{\hat S}|g(\hat S;\alpha_{D},\beta_{D},c_{D})-P(H=1|\hat S)| \le \varepsilon.
$$
\label{efficiency}
\end{theorem}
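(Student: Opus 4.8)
The plan is to reduce the claim to the estimation of three real numbers. First I would record that, under the beta assumption, the true calibration curve is already an exact member of the family in Eq.~\ref{prior_function}: substituting $P(\hat S|H=0)\sim{\cal B}eta(\alpha_0,\beta_0)$ and $P(\hat S|H=1)\sim{\cal B}eta(\alpha_1,\beta_1)$ into the derivation of Eq.~\ref{prior_infer} gives $P(H=1|\hat S)=g(\hat S;\alpha^{*},\beta^{*},c^{*})$ with $\alpha^{*}=\alpha_1-\alpha_0\ge 0$, $\beta^{*}=\beta_0-\beta_1\ge 0$ (exactly the monotonicity constraints), and $c^{*}=\ln(\frac{beta(\alpha_1,\beta_1)}{beta(\alpha_0,\beta_0)}\cdot\frac{P(H=0)}{P(H=1)})$. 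Hence there is nothing left to estimate but the triple $(\alpha^{*},\beta^{*},c^{*})\in\mathbb{R}^{3}$.

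Second, I would linearise the family. For $g=g(\hat s;\alpha,\beta,c)$ one has $\ln\frac{1-g}{g}=-\alpha\ln\hat s+\beta\ln(1-\hat s)+c$, which is affine in $(\alpha,\beta,c)$. Therefore, picking three bin centres $\hat s_1,\hat s_2,\hat s_3$ whose feature vectors $v_j=(-\ln\hat s_j,\,\ln(1-\hat s_j),\,1)$ are linearly independent, the rule that sends three target values $(p_1,p_2,p_3)$ to the unique $(\alpha,\beta,c)$ solving $v_j\cdot(\alpha,\beta,c)=\ln\frac{1-p_j}{p_j}$ for $j=1,2,3$ is a fixed invertible linear map precomposed with the fixed coordinatewise bijection $p\mapsto\ln\frac{1-p}{p}$, and so is Lipschitz on every compact subset of $(0,1)^{3}$ (a clean instance of Theorem~\ref{calibration_curve_lipschitz}). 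This makes precise the statement that "three representative bins suffice": if the binning-scheme space $\mathcal{B}$ in Algorithm~\ref{Estimating calibration curve alg} is taken to be the single scheme $B^{*}=\{b_1,b_2,b_3\}$, then the objective of Eq.~\ref{equivalence} equals $\sum_{j}P(b_j)\,e^{(g(\hat s_j;\theta)-\hat p_j)^{2}}\ge 1$, with equality exactly when $g(\cdot;\theta)$ interpolates the empirical frequencies $\hat p_j=N_{\hat s_j}^{pos}/N_{\hat s_j}$; by the previous sentence that interpolant is unique and is precisely the linear solve above (for the perturbations considered below it automatically meets $\alpha,\beta\ge 0$; otherwise one projects onto this constraint set, a non-expansive operation). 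So it suffices to bound the error of the estimator produced with $\mathcal{B}=\{B^{*}\}$, which is the achievability reading that the surrounding discussion intends.

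Third, I would finish with a concentration bound. Allocate the $N$ samples so that each of the three chosen bins holds $n=N/3$ of them. Hoeffding's inequality applied to the $\{0,1\}$-valued hit indicators inside bin $j$ yields $\Pr[\,|\hat p_j-g^{*}(\hat s_j)|>\eta\,]\le\exp(-2n\eta^{2})$ --- the same tail invoked for histogram binning in the text --- and a union bound over $j=1,2,3$ shows that, with probability at least $1-\delta$, all three deviations are at most $\eta$ once $n\ge\frac{\ln(1/\delta)}{2\eta^{2}}$, i.e.\ $N\ge\frac{3\ln(1/\delta)}{2\eta^{2}}$. On that event the Lipschitz linear solve gives $\|(\alpha_D,\beta_D,c_D)-(\alpha^{*},\beta^{*},c^{*})\|\le C\eta$; since $g(\hat s;\alpha,\beta,c)$ is Lipschitz in $(\alpha,\beta,c)$ at each interior $\hat s$ and takes values in $[0,1]$ --- so that, by integrability of $\xi(\hat S)$, neighbourhoods of $\hat s=0$ and $\hat s=1$ of small $\xi$-mass contribute little to the expectation --- this upgrades to $E_{\hat S}|g(\hat S;\alpha_D,\beta_D,c_D)-P(H=1|\hat S)|\le C'\eta$. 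Taking $\eta=\varepsilon/C'$ and, as is implicitly done for the histogram-binning baseline, folding the $O(1)$ factor $C'$ into $\varepsilon$, the requirement becomes $N\ge\frac{3\ln(1/\delta)}{2\varepsilon^{2}}$.

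The step I expect to be the main obstacle is passing from "accurate at three points" to "accurate in $E_{\hat S}$ over all of $[0,1]$". This needs (i) three bin centres that are bounded away from each other and from the endpoints $0,1$ and carry enough sample mass under the confidence-score mixture-of-betas density, together with a uniform control of the condition number of the design matrix with rows $v_1,v_2,v_3$; and (ii) a way around the fact that $g$ is not uniformly Lipschitz in $(\alpha,\beta,c)$ near the endpoints, where $\ln\hat s$ and $\ln(1-\hat s)$ diverge --- exactly the point at which one must rely on $g\in[0,1]$ and the integrability of $\xi$ rather than on a pointwise estimate. A lesser, more cosmetic issue is that Algorithm~\ref{Estimating calibration curve alg} Bayes-averages over all binning schemes rather than solving one three-point system, so the theorem is most cleanly read as an achievability bound over the choice of scheme, consistent with how the text phrases it.
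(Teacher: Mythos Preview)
Your proposal is correct and follows essentially the same route as the paper. The paper isolates your linearisation observation as a separate Lemma (three observations determine $(\alpha,\beta,c)$ uniquely via the affine system in $\ln\frac{1-g}{g}$), then applies Hoeffding per location, a union bound, and the Lipschitz continuity of $g$ with respect to the data (Theorem~\ref{calibration_curve_lipschitz}) to pass from per-bin accuracy to the expected error; the only cosmetic difference is that the paper invokes Theorem~\ref{calibration_curve_lipschitz} directly on the data distribution rather than going through parameter-space Lipschitzness as you do, and it too ends with a bound of the form $L\cdot\varepsilon$ rather than $\varepsilon$, so the constant-absorption you flag is present there as well.
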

\begin{figure*}[h]
	\centering
	\includegraphics[width=1.\textwidth]{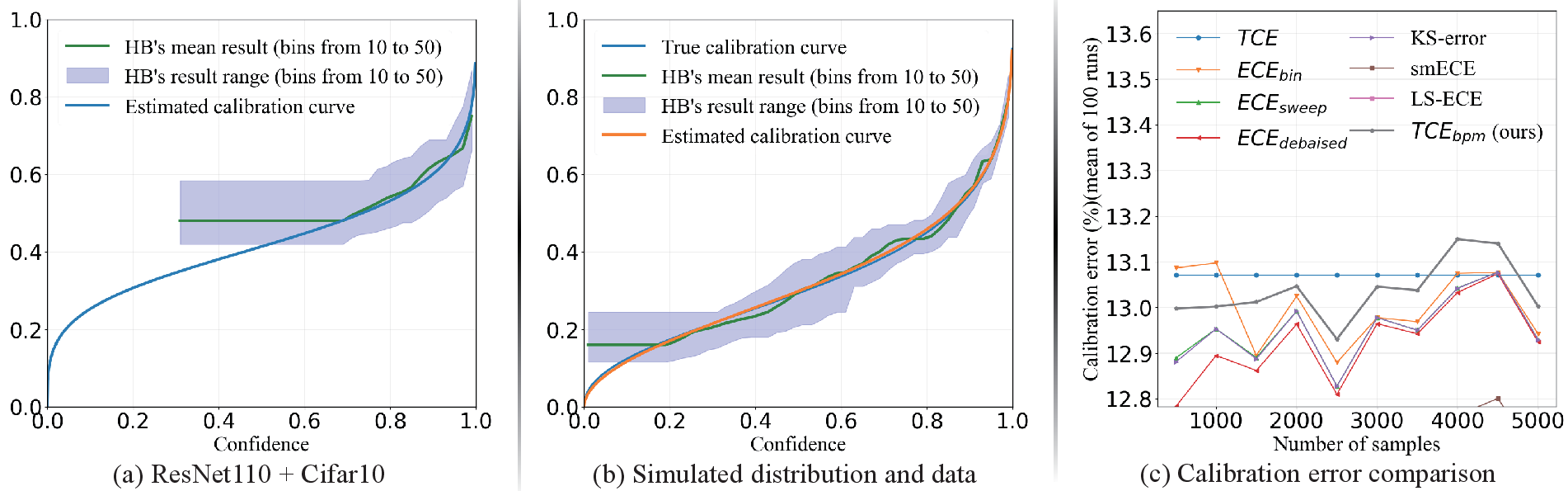}
	\caption{Experimental results of our method. HB represents Histogram binning \cite{zadrozny2001obtaining}. In (a), the estimated calibration curve on real data aligns well with histogram binning results from various binning schemes and closely matches the mean result. In (b), the calibration curve estimated by our method closely approximates the true calibration curve in simulated data. In (c), our calibration metric is closest to the true calibration error (TCE) in many times (e.g., when the number of samples is 1500, 2000, 2500, 3000, 3500, and 5000).}
	\label{Reliability_diagram}
\end{figure*}
\section{Simulating Datasets to Compare Evaluation Metrics}
\label{Simulating}
\begin{algorithm}[t]
	\caption{Simulating dataset with binomial process.}
	\begin{algorithmic}
		\STATE \textbf{Initialize:} 
		\STATE \hspace{0.5cm} $P(H=1|\hat S)=g(\hat S;\theta)$; $a_{1}$; $a_{2}$; $N$. 
		\STATE \textbf{Sampling:} 
		\STATE \hspace{0.5cm}$i=1$; $D=\{\}$.
		\STATE \hspace{0.5cm}\textbf{while $i \le N$:} 
		\STATE \hspace{0.5cm}\hspace{0.5cm} $\hat S= {\rm{Sampling}}\ {\rm{from}}\ Beta(a_{1},a_{2})$,
		\STATE \hspace{0.5cm}\hspace{0.5cm} $P(H=1|\hat S) = g(\hat S;\theta)$,
		\STATE \hspace{0.5cm}\hspace{0.5cm} $H= {\rm Sampling}\ {\rm from}\ {\cal BI}(1,P(H=1|\hat S))$,
		\STATE \hspace{0.5cm}\hspace{0.5cm} Adding $(\hat S,H)$ into $D$.
		\STATE \textbf{Return} $D$.
	\end{algorithmic}
	\label{alg2}
\end{algorithm} 
A key challenge in developing calibration metrics is the lack of ground truth for calibration curves and confidence scores, hindering the measurement of discrepancies between metrics and actual calibration errors. $Roelofs\ et\ al.$ \cite{roelofs2022mitigating} use the fitted function on the publicly available logit datasets as the true distribution behind the data, then use the fitted function to calculate TCE and compare TCE with existing calibration metrics. In this paper, an opposite operation is proposed, i.e., first preseting the true calibration distribution and then obtaining realistic calibration data through binomial process sampling.

Specifically, in Section \ref{BPM}, we model the process of sampling calibration data as a binomial process. Another important role of this modeling is that realistic calibration data sets can be sampled through using known calibration curves and confidence distributions, as shown in Algorithm \ref{alg2}. The confidence score $\hat S$ are first sampled, and then the calibration value $P(H=1|\hat S)$ is calculated. Then, $P(H=1|\hat S)$ is used as the probability of a single event success in the binomial distribution, and binomial distribution sampling is performed to sample $H$. Since the true calibration curve and confidence distribution are known, TCE can be calculated accurately. The sampled calibration data is then used to calculate other calibration metrics, and by comparing these metrics with the accurately calculated TCE, it can be determined which calibration metrics are better. 
\section{Results}
The effectiveness of the proposed method is verified from four perspectives: 1) On the real-world datasets, the calibration curve estimated by our method is compared with the results of histogram binning under various binning schemes; 2) On the datasets simulated by Algorithm \ref{alg2}, the discrepancy between the calibration curve estimated by our method and the true calibration curve is compared; 3) On the datasets estimated by Algorithm \ref{alg2}, the discrepancy between $TCE_{bpm}$ and the true calibration error is compared; 4) On the real-world datasets, multiple calibration metrics comparison between our calibraiton method with other calibration methods is performed. Due to space limitations, details of data selection and implementation details are given in Appendix \ref{Experimental data} and Appendix \ref{Hyperparameter settings}. In Appendix \ref{Experimental data}, ten publicly available logit datasets (i.e., real-world datasets) and five true distributions (named D1, D2,$\cdots$, and D5, respectively) were selected for the experiments.
\begin{table*}[h]
	\centering
	\footnotesize
	\setlength\tabcolsep{3pt}
	\renewcommand{\arraystretch}{1.3}
	\begin{tabular}{c|c|cccccc}
		\toprule
		Network and Dataset&Calibration methods&$ECE_{bin} \downarrow$&$ECE_{debaised} \downarrow$&$ECE_{sweep} \downarrow$&KS-error$\downarrow$&smECE$\downarrow$&$TCE_{bpm} \downarrow$\\
		\toprule
		\multirow{6}{*}{\makecell[c]{ResNet110\\ Cifar10}}&Uncalibration&0.04755&0.04752&0.04750&0.04750&0.04271& 0.05511\\
		&Temperature scaling&\underline{0.00745}&0.00568&0.00576&0.00590&\underline{0.00901}&0.00825\\
		&Isotonic regression&0.00753&\underline{0.00550}&0.00586&0.00601&0.00920&\underline{0.00649}\\
		&Mix-n-Match&\underline{0.00745}&0.00574&0.00576&0.00590&\underline{0.00901}&0.00825\\
		&Spline calibration&0.01322&0.01181&\underline{0.00347}&\underline{0.00430}&0.01280&0.01136\\
		&TPM calibration (Ours)&\textbf{0.00458}&\textbf{0.00312}&\textbf{0.00146}&\textbf{0.00162}&\textbf{0.00756}&\textbf{0.00214}\\
		\hline
		\bottomrule 
	\end{tabular}
	\caption{Comparison with other calibration methods on real data. Bold represents the best result, and underline represents the second-best result.}
	\label{Calibration methods comparation_one}
\end{table*}
\subsection{Estimated Results of Calibration Curves}
\subsubsection{Results in Real Datasets}
The estimated results of the calibration curve on the public ResNet110's logit dataset trained on Cifar10 is shown in (a) of Fig. \ref{Reliability_diagram}. The results on other datasets are shown in Appendix \ref{Comparison in real datasets in all confidence scores}. In order to intuitively show the effect of the calibration curve estimated by our method, the means and ranges of the calibration values estimated by the histogram binning under various binning schemes are simultaneously visualized. The calibration curve estimated by our method is close to the mean result of the histogram binning under various binnings and always falls within the result range of histogram binning under various binnings, indicating a relatively accurate and robust performance. In addition, Appendix \ref{Comparison in real datasets in all confidence scores} shows that our method can achieve such performance under various sharpness, meaning that our method has certain versatility. Furthermore, in the regions of low confidence scores, which are also the regions of low density of confidence scores, the calibration confidence estimated by the binning method fluctuates greatly (e.g., the result range is broad when the confidence score is lower than 0.8 in (a) of Fig. \ref{Reliability_diagram}) and sometimes even non-monotonic (e.g., (e), (g), and (h) in Fig. \ref{fig_fit_real_data} of Appendix \ref{Comparison in real datasets in all confidence scores}). The two situations are obviously unreasonable \cite{kumar2019verified,roelofs2022mitigating}. Thanks to the continuous and monotonic prior distribution, the two unreasonable situations can be well avoided by adopting our method.
\subsubsection{Results in Simulating Datasets}
The estimated result of the calibration curve on the dataset simulated by the true distribution D1 is shown in (b) of Fig. \ref{Reliability_diagram}. The results on other true distributions are shown in Appendix \ref{compare_metrics}. The calibration curve estimated by our method is closely aligned with the true calibration curve, with a mean absolute error (see Eq. \ref{EAD} in Appendix \ref{Hyperparameter settings}) of 0.0099, which is lower than 0.0233 of the mean result of histogram binning. This verifies the effectiveness of our method. In addition, in the regions of low confidence scores (i.e., the low-density regions of confidence scores), the broad result range indicates that the results estimated by the histogram binning method under a specific binning scheme may deviate from the true calibration value. Even the mean result of histogram binning under various binning schemes sometimes deviates significantly from the true calibration curve (e.g., when the confidence score is around 0.8 in D2 of Fig. \ref{fit_true_curve} in Appendix \ref{compare_metrics}). However, in our method, benefiting from the excellent integration between the well-informed prior distribution and the empirical data, the calibration curve can be well estimated even in the low-density regions of confidence scores. This good estimation effect makes us believe that fully integrating well-informed prior distribution with empirical data is a promising future direction for confidence calibration.
\subsection{Estimated Results of Calibration Metrics}
Six state-of-the-art calibration metrics are compared with $TCE_{bpm}$. The details of these six metrics are shown in Appendix \ref{Hyperparameter settings}. Among these calibration metrics, the one closer to TCE is more accurate. In (c) of Fig. \ref{Reliability_diagram}, the calculation results of these metrics on the dataset simulated by the true distribution D1 are shown. The results on other true distributions are shown in Appendix \ref{compare_metrics}. It can be seen that our calibration metric $TCE_{bpm}$ is closest to the true calibration error (TCE) in many times (e.g., when the number of samples is 1500, 2000, 2500, 3000, 3500, and 5000 in (c) of Fig. \ref{Reliability_diagram}). This shows that the calibration metrics estimated by our method are competitive. Even though we use a common confidence score estimation method (Moment estimation method \cite{wang2006topics}), the estimated $TCE_{bpm}$ is so competitive, which once again verifies that the calibration curve estimated by our method is quite accurate.
\subsection{Comparison with Other Calibration Methods}
Table \ref{Calibration methods comparation_one} shows the comparison results of calibration metrics between our calibration method and other calibration methods on the public ResNet110's logit dataset trained on Cifar10, where the last column is our calibration metric. All considered calibration methods can significantly improve confidence. The calibration error of our method on six metrics is 50.15\% less on average than the second place.
The comparison results on Wide-ResNet32's logits dataset of Cifar100 and DenseNet162 logits dataset of ImageNet are shown in Appendix \ref{effect of other calibration methods}.
\section{Conclusion and Discussion}
In this paper, we focus on how to effectively incorporate prior distributions behind calibration curves with empirical data to calibrate confidence better. To address this, we propose a new calibration curve estimation method via binomial process modeling and maximum likelihood estimation and perform theoretical analysis. Furthermore, using the estimated calibration curve, a new calibration metric is proposed, which is proven to be a consistent calibration measure. In addition, this paper proposes a new simulation method for calibration data through binomial process modeling, which can serve as a benchmark to measure and compare the discrepancy between existing calibration metrics and the true calibration error. Extensive empirical studies on real-world and simulated data support our findings and showcase the effectiveness of our method.
\subsubsection{Potential Impact, Limitations and Future Work}
We explore the impact of the prior distribution behind the calibration curve on the confidence calibration effect. We also provide a solution as a starting point to utilize the prior distribution of the calibration curve, which we believe has the potential to inspire more rich and subsequent works, ultimately leading to improved decision-making in real-world applications, especially for underrepresented populations and safety-critical scenarios. However, our study also has several limitations. First, the Bayesian average strategy is used to construct the likelihood function of the binomial process, which increases computational cost. Although the increased computational cost may be negligible, further reduction in computational cost is promising. Because Theorem \ref{efficiency} tells us that just three most representative bins are selected, a good parameter estimation can be achieved. Future research can investigate how to select the three most representative bins to achieve efficient parameter estimation. Additionally, we only focus on the confidence calibration in the closed-set classification problem. Future work can generalize this to multi-label, open-set, or generative settings.

\section*{Acknowledgements}
This work was supported in part by the Science and Technology Innovation Program of Hunan Province (Grant No.2024RC1007), in part by the Young Scientists Fund of the National Natural Science Foundation of China (Grant 62303491), and in part by the Project of State Key Laboratory of Precision Manufacturing for Extreme Service Performance (Grant No.ZZYJKT2023-14).

\bibliography{aaai25}

\clearpage
\appendix
\section*{Appendix}
\renewcommand{\arraystretch}{2.0}
\section{Theory Proof}
\subsection{The Proof of Equivalence Solution}
\label{equivalence Solution}
\begin{proof}
	Let:
	\begin{equation}
		g_{p}(g(\hat S; \theta)) = C_{{N_{\hat S}}}^{N_{\hat S}^{pos}}g{(\hat S;\theta )^{N_{\hat S}^{pos}}}{(1 - g(\hat S;\theta ))^{{N_{\hat S}} - N_{\hat S}^{pos}}}.
	\end{equation}
	Then maximizing $g_{p}$ is equivalent to:
	\begin{equation}
		\frac{{\partial {g_p}}}{{\partial g}} = 0.
		\label{equi1}
	\end{equation}
	It is not difficult to calculate that the solution of Eq. \ref{equi1} is $g(\hat S;\theta)=\frac{{N_{\hat S}^{pos}}}{{{N_{\hat S}}}}$. Therefore, the essence of optimizing Eq. \ref{MHE_bin} is to find a suitable $\theta$ to minimize the distance between vector $(g({{\hat S}_1};\theta ),g({{\hat S}_2};\theta ), \cdots ,g({{\hat S}_{{N_b}}};\theta ))$ and vector $(\frac{{N_{{{\hat S}_1}}^{pos}}}{{{N_{{{\hat S}_1}}}}},\frac{{N_{{{\hat S}_2}}^{pos}}}{{{N_{{{\hat S}_2}}}}}, \cdots ,\frac{{N_{{{\hat S}_{{N_b}}}}^{pos}}}{{{N_{{{\hat S}_{{N_b}}}}}}})$, where $N_b$ represents the number of bins. Obviously, this is also the purpose of Eq. \ref{equivalence}.
\end{proof}

\subsection{The Proof of Theorem \ref{calibration_curve_lipschitz}}
\begin{proof}
	Let $\theta_{D}$ represent the parameters learned from data $D$. According to Eq. \ref{equivalence}, we have:
	\begin{equation}
		{\theta _D} = \mathop {{\mathop{\rm argmin}\nolimits} }\limits_\theta  \mathop {{\rm{ }}E}\limits_{\hat S} [{e^{{{(g(\hat S;\theta ) - \frac{{N_{\hat S}^{pos}}}{{{N_{\hat S}}}})^2}}}}].
	\end{equation}
	Let ${g_p}(\hat S;\theta ) = {e^{{{(g(\hat S;\theta ) - \frac{{N_{\hat S}^{pos}}}{{{N_{\hat S}}}})^2}}}}$. Obviously, $\frac{{\partial {g_p}}}{{\partial g}}$ is Lipchitz continuous $w.r.t.$ $g(\hat S;\theta )$ in $[0,1]$. Since ${g_p}(\hat S;\theta )$ is a convex function $w.r.t.$ $g(\hat S;\theta )$, $\frac{{\partial {g_p}}}{{\partial g}}$ is a monotone function $w.r.t.$ $g(\hat S;\theta )$. Therefore, there exists a function from $\frac{{\partial {g_p}}}{{\partial g}}$ to $g(\hat S;\theta )$, and $g(\hat S;\theta )$ is Lipchitz continuous $w.r.t.$ $\frac{{\partial {g_p}}}{{\partial g}}$. Therefore, $\forall$ a coupling $(\hat S_{1},\hat S_{2})$:
	\begin{equation}
		\begin{array}{l}
			\mathop E\limits_{({{\hat S}_1},{{\hat S}_2})} |g({{\hat S}_1};{\theta _{{D_1}}}) - g({{\hat S}_2};{\theta _{{D_2}}})|\\
			\le {L_1} \cdot \mathop E\limits_{({{\hat S}_1},{{\hat S}_2})} |\frac{{\partial {g_p}({{\hat S}_1};{\theta _{{D_1}}})}}{{\partial g({{\hat S}_1};{\theta _{{D_1}}})}} - \frac{{\partial {g_p}({{\hat S}_2};{\theta _{{D_2}}})}}{{\partial g({{\hat S}_2};{\theta _{{D_2}}})}}|.
		\end{array}
		\label{curve_lip_1}
	\end{equation}
	Similarly,  $\frac{{\partial {g_p}}}{{\partial g}}$ is Lipchitz continuous $w.r.t.$ $\frac{{N_{\hat S}^{pos}}}{{{N_{\hat S}}}}$ in $[0,1]$, then:
	\begin{equation}
		\begin{array}{*{20}{l}}
			{\mathop {{\rm{ }}E}\limits_{({{\hat S}_1},{{\hat S}_2})} |\frac{{\partial {g_p}({{\hat S}_1};{\theta _{{D_1}}})}}{{\partial g({{\hat S}_1};{\theta _{{D_1}}})}} - \frac{{\partial {g_p}({{\hat S}_2};{\theta _{{D_2}}})}}{{\partial g({{\hat S}_2};{\theta _{{D_2}}})}}|}\\
			\begin{array}{l}
				\le {L_2} \cdot \mathop {{\rm{ }}E}\limits_{({{\hat S}_1},{{\hat S}_2})} |\frac{{N_{{{\hat S}_1}}^{pos}}}{{{N_{{{\hat S}_1}}}}} - \frac{{N_{{{\hat S}_2}}^{pos}}}{{{N_{{{\hat S}_2}}}}}|\\
				\le {L_2} \cdot \mathop E\limits_{({{\hat S}_1},{{\hat S}_2})} [|\frac{{N_{{{\hat S}_1}}^{pos}}}{{{N_{{{\hat S}_1}}}}} - \frac{{N_{{{\hat S}_2}}^{pos}}}{{{N_{{{\hat S}_2}}}}}| + |{{\hat S}_1} - {{\hat S}_2}|].
			\end{array}
		\end{array}
		\label{curve_lip_2}
	\end{equation}
	Combining Eq. \ref{curve_lip_1} and Eq. \ref{curve_lip_2}, proof is completed.
\end{proof}

\subsection{The Proof of Theorem \ref{lipschitz}}
\label{lipschitz proof}
\begin{proof}
	We have:
	\begin{equation}
		TC{E_{bpm}}(D) = \int_0^1 {|g(\hat S;{\theta _D}) - \hat S| \cdot {\xi _D}(\hat S)} d\hat S,
	\end{equation}
	where $\theta _D$ represents the parameters learned from data $D$, and $\xi _D$ represents the probability density learned from data $D$. Then:
	
	\begin{equation}
		\begin{array}{l}
			|TC{E_{bpm}}({D_1}) - TC{E_{bpm}}({D_2})|\\
			= |\int_0^1 {|g(\hat S;{\theta _{{D_1}}}) - \hat S| \cdot {\xi _{{D_1}}}(\hat S) - |g(\hat S;{\theta _{{D_2}}}) - \hat S| \cdot {\xi _{{D_2}}}(\hat S)d\hat S} |\\
			= |\int_0^1 {|g(\hat S;{\theta _{{D_1}}}) - \hat S|}  \cdot {\xi _{{D_1}}}(\hat S) - |g(\hat S;{\theta _{{D_1}}}) - \hat S| \cdot {\xi _{{D_2}}}(\hat S)\\
			+ |g(\hat S;{\theta _{{D_1}}}) - \hat S| \cdot {\xi _{{D_2}}}(\hat S) - |g(\hat S;{\theta _{{D_2}}}) - \hat S| \cdot {\xi _{{D_2}}}(\hat S)d\hat S|\\
			\le \int_0^1 {|g(\hat S;{\theta _{{D_1}}}) - \hat S|}  \cdot |{\xi _{{D_1}}}(\hat S) - {\xi _{{D_2}}}(\hat S)|d\hat S\\
			+ |\int_0^1 {|g(\hat S;{\theta _{{D_1}}}) - g(\hat S;{\theta _{{D_2}}})|}  \cdot {\xi _{{D_2}}}(\hat S)d\hat S|\\
			\le \int_0^1 {|{\xi _{{D_1}}}(\hat S) - {\xi _{{D_2}}}(\hat S)|d\hat S} \\
			+ |\int_0^1 {|g(\hat S;{\theta _{{D_1}}}) - g(\hat S;{\theta _{{D_2}}})|}  \cdot {\xi _{{D_2}}}(\hat S)d\hat S|,
		\end{array}
		\label{theorem1_1}
	\end{equation}
	where the first inequality is because of the triangle inequality, and the second inequality is because of $|g(\hat S;{\theta _{{D_1}}}) - \hat S| \le 1$ holds for all $\hat S$. Because $g(\hat S; \theta_{D})$ and $\xi_{D}(\hat S)$ are Lipchitz continuous in data distributions for all $\hat S \in [0,1]$, then:
	\begin{equation}
		\begin{array}{l}
			|g(\hat S;{\theta _{{D_1}}}) - g(\hat S;{\theta _{{D_2}}})| \le {L_1} \cdot W({D_1},{D_2}),\\
			|{\xi _{{D_1}}}(\hat S) - {\xi _{{D_2}}}(\hat S)| \le {L_2} \cdot W({D_1},{D_2}).
		\end{array}
		\label{theorem1_2}
	\end{equation}
	Therefore, combining Eq. \ref{theorem1_1} and Eq. \ref{theorem1_2}, it holds:
	\begin{equation}
		\begin{array}{l}
			|TC{E_{bpm}}({D_1}) - TC{E_{bpm}}({D_2})|\\
			\le {L_2} \cdot W({D_1},{D_2}) \cdot \int_0^1 {1d\hat S}  + {L_1} \cdot W({D_1},{D_2}) \cdot |\int_0^1 {{\xi _{{D_2}}}(\hat S)d\hat S}| \\
			= ({L_1} + {L_2}) \cdot W({D_1},{D_2}).
		\end{array}
	\end{equation}
\end{proof}

\subsection{The Proof of Theorem \ref{Consistenty}}
\label{Consistenty proof}
\begin{proof}
	First, let's prove robust completeness. Let $\cal{P}$ be the family of all perfectly calibrated distributions. In Theorem \ref{lipschitz}, we have proved that $TCE_{bpm}(D)$ is Lipchitz continuous $w.r.t$ $D$. Therefore:
	\begin{equation}
		\forall \mathcal{D} \in \mathcal{P}, |TC{E_{bpm}}(D) - TC{E_{bpm}}(\mathcal{D})| \le L \cdot W(D,\mathcal{D}).
	\end{equation}
	Next, just prove that $TCE_{bpm}(\mathcal{D})=0$, robust completeness is proved. We know:
	\begin{equation}
		TC{E_{bpm}}(\mathcal{D}) = \int_0^1 {|g(\hat S;{\theta_\mathcal{D}}) - \hat S| \cdot {\xi_\mathcal{D}}(\hat S)} d\hat S.
	\end{equation}
	Because $ \mathcal{D} \in \mathcal{P}$, $P(H=1|\hat S) = \hat S$. Therefore:
	\begin{equation}
		TC{E_{bpm}}(\mathcal{D}) = \int_0^1 {|g(\hat S;{\theta_\mathcal{D}}) - P(H=1|\hat S)| \cdot {\xi_\mathcal{D}}(\hat S)} d\hat S.
	\end{equation}
	Let ${\theta ^*} = \mathop {{\mathop{\rm argmin}\nolimits} }\limits_{\theta  \in \Theta } {E_{\hat S}}|g(\hat S;\theta ) - P(H = 1|\hat S)|$,where $\Theta$ represents the parameter space of $g(\hat S;\theta )$. Due to $\cal{G}$ includes the calibration curve $P(H=1|\hat S)$, therefore $|g(\hat S;\theta^{*})-P(H=1|\hat S)|=0$. Then:
	\begin{equation}
	\begin{array}{*{20}{l}}
		{TC{E_{bpm}}({\cal D}) = \int_0^1 {|g(\hat S;{\theta _{\cal D}}) - P(H = 1|\hat S)| \cdot {\xi _{\cal D}}(\hat S)} d\hat S}\\
		\begin{array}{l}
			\le \int_0^1 {|g(\hat S;{\theta _{\cal D}}) - g(\hat S;{\theta ^*})| \cdot {\xi _{\cal D}}(\hat S)} d\hat S\\
			+ \int_0^1 {|g(\hat S;{\theta ^*}) - P(H = 1|\hat S)| \cdot {\xi _{\cal D}}(\hat S)} d\hat S
		\end{array}\\
		{ = \int_0^1 {|g(\hat S;{\theta _{\cal D}}) - g(\hat S;{\theta ^*})| \cdot {\xi _{\cal D}}(\hat S)} d\hat S.}
	\end{array}
	\end{equation}
	Let ${\cal D} = {\{ (\hat S,{N_{\hat S}},N_{\hat S}^{pos})\} _{0 \le \hat S \le 1}}$, and the ${D^*} = {\{ (\hat S,{N_{\hat S}},{N_{\hat S}} \cdot (P(H = 1|\hat S) + \frac{\varepsilon }{2}))\} _{0 \le \hat S \le 1}}$, where $D^{*}$ is the observation data of $g(\hat S;\theta^{*})$ and $\varepsilon/2$ is sampling error caused by random noise. Due to  $g(\hat S;\theta)$ is Lipschitz continuity $w.r.t.$ data distributions, then:
	\begin{equation}
	\begin{array}{*{20}{l}}
		{\int_0^1 {|g(\hat S;{\theta _{{\cal D}}}) - g(\hat S;{\theta ^*})| \cdot {\xi _{{\cal D}}}(\hat S)} d\hat S}\\
		{ \le L \cdot {E_{\hat S}}|\frac{{N_{\hat S}^{pos}}}{{{N_{\hat S}}}} - P(H = 1|\hat S) - \frac{\varepsilon }{2}|}\\
		{ \le L \cdot \mathop {\max }\limits_{\hat S} {{\{ |\frac{{N_{\hat S}^{pos}}}{{{N_{\hat S}}}} - P(H = 1|\hat S)|\} }_{0 < \hat S < 1}} + \frac{\varepsilon }{2},}
	\end{array}
	\end{equation}
	where $L$ is the Lipschitz constant. According to Hoeffding’s inequality, we have:
	\begin{equation}
	\begin{array}{l}
		P(|\frac{{N_{\hat S}^{pos}}}{{{N_{\hat S}}}} - P(H = 1|\hat S)| \ge \frac{\varepsilon }{2})\\
		= P(|N_{\hat S}^{pos} - {N_{\hat S}} \cdot P(H = 1|\hat S)| \ge {N_{\hat S}} \cdot \frac{\varepsilon }{2})\\
		\le {e^{ - \frac{1}{2} \cdot {N_{\hat S}} \cdot {\varepsilon ^2}}}.
	\end{array}
	\end{equation}
	Therefore, for $\forall \delta \in (0,1)$, when $N_{\hat S} \ge \frac{2In(1/\delta)}{\varepsilon^{2}}$, satisfy the following inequality with $1-\delta$ probability:
	\begin{equation}
	|\frac{{N_{\hat S}^{pos}}}{{{N_{\hat S}}}} - P(H = 1|\hat S)| \le \frac{\varepsilon }{2}.
	\end{equation}
	Therefore, for $\forall \delta \in (0,1)$, when $N_{\hat S} \ge \frac{2In(1/\delta)}{\varepsilon^{2}}$ holds for all $\hat S$, satisfy the following result with $1-\delta$ probability:
	\begin{equation}
	\begin{array}{l}
		TC{E_{bpm}}(\cal{D})\\
		\le L \cdot \mathop {\max }\limits_{\hat S} {\{ |\frac{{N_{\hat S}^{pos}}}{{{N_{\hat S}}}} - P(H = 1|\hat S)|\} _{0 < \hat S < 1}} + L \cdot \frac{\varepsilon }{2}\\
		\le L \cdot \frac{\varepsilon }{2} + L \cdot \frac{\varepsilon }{2}\\
		\le L \cdot \varepsilon  \to 0.
	\end{array}
	\label{robust_com}
	\end{equation}
	Therefore, robust completeness is proven. 
	
	Second, let's prove robust soundness. We know:
	\begin{equation}
	\begin{array}{*{20}{l}}
		{TC{E_{bpm}}(D) = \int_0^1 {|g(\hat S;{\theta _D}) - \hat S|}  \cdot {\xi _D}(\hat S)d\hat S}\\
		{ = \int_0^1 {|(P(H = 1|\hat S) - \hat S) - (P(H = 1|\hat S) - g(\hat S;{\theta _D}))|}  \cdot {\xi _D}(\hat S)d\hat S}\\
		{ \ge \int_0^1 {\left| {\left| {P(H = 1|\hat S) - \hat S} \right| - \left| {P(H = 1|\hat S) - g(\hat S;{\theta _D})} \right|} \right| \cdot {\xi _D}(\hat S)d\hat S} }\\
		{ \ge \left| {\int_0^1 {(\left| {P(H = 1|\hat S) - \hat S} \right|}  - \left| {P(H = 1|\hat S) - g(\hat S;{\theta _D})} \right|) \cdot {\xi _D}(\hat S)d\hat S} \right|}\\
		{ \ge \left| {\int_0^1 {\left| {P(H = 1|\hat S) - \hat S} \right|}  \cdot {\xi _D}(\hat S)d\hat S - L \cdot \left| \varepsilon  \right|} \right|}\\
		{ \ge \int_0^1 {\left| {P(H = 1|\hat S) - \hat S} \right|}  \cdot {\xi _D}(\hat S)d\hat S - L \cdot \left| \varepsilon  \right|,}
	\end{array}
	\label{soundness1}
	\end{equation}
	where the first inequality is due to the triangle inequality, the second inequality is due to Jensen's inequality, the third inequality is due to Eq. $\ref{robust_com}$, $L$ is Lipschitz constant, and $\varepsilon \to 0$. We have:
	\begin{equation}
	\int_0^1 {|P(H = 1|\hat S) - \hat S|}  \cdot {\xi _D}(\hat S)d\hat S = \mathop E\limits_D [|P(H = 1|\hat S) - \hat S|].
	\label{soundness2}
	\end{equation}
	Let $D = {\{ ({{\hat S}_i},{H_i})\} _{1 \le i \le N}} = {\{ ({{\hat S}_j},{N_{{{\hat S}_j}}},N_{{{\hat S}_j}}^{pos})\} _{1 \le j \le {N^{'}}}}$, where $N^{'}$ is the number of sampling locations of $\hat S$. Let ${{\cal D}^{'}} = {\{ (P(H = 1|{{\hat S}_j}),{N_{{{\hat S}_j}}},N_{{{\hat S}_j}}^{pos})\} _{1 \le j \le {N^{'}}}}$, Obviously, $\mathcal{D}^{'} \in \mathcal{P}$. Since $\hat S$ in $D$ and $\mathcal{D^{'}}$ is the same, then:
	\begin{equation}
	\begin{array}{l}
		\mathop E\limits_D [|P(H = 1|\hat S) - \hat S|] = \mathop E\limits_{{S_1} = {S_2}} [|P(H = 1|{{\hat S}_1}) - {{\hat S}_2}|\\
		= \mathop E\limits_{{S_1} = {S_2}} [|P(H = 1|{{\hat S}_1}) - {{\hat S}_2}| + |\frac{{N_{{{\hat S}_1}}^{pos}}}{{{N_{{{\hat S}_1}}}}} - \frac{{N_{{{\hat S}_2}}^{pos}}}{{{N_{{{\hat S}_2}}}}}|],
	\end{array}
	\label{soundness3}
	\end{equation}
	where $\hat S_{1} \in D$ and $\hat S_{2} \in \mathcal{D^{'}}$. Since $\hat S_{1}=\hat S_{2}$ is just a case of the couplings $(\hat S_{1},\hat S_{2})$, then:
	\begin{equation}
	\begin{array}{l}
		\mathop E\limits_{{S_1} = {S_2}} [|P(H = 1|{{\hat S}_1}) - {{\hat S}_2}| + |\frac{{N_{{{\hat S}_1}}^{pos}}}{{{N_{{{\hat S}_1}}}}} - \frac{{N_{{{\hat S}_2}}^{pos}}}{{{N_{{{\hat S}_2}}}}}|]\\
		\ge W(D,{{\cal D}^{'}}) \ge \underline{dCE}(D).
	\end{array}
	\label{soundness4}
	\end{equation}
	Combining Eq. $\ref{soundness1}$, Eq. $\ref{soundness2}$, Eq. $\ref{soundness3}$, and Eq. $\ref{soundness4}$, therefore:
	\begin{equation}
	TC{E_{bpm}}(D) \ge \underline{dCE}(D) - |\varepsilon|.
	\end{equation}
	Therefore, robust soundness is proven.
\end{proof}
\subsection{The Proof of Corollary \ref{corollary}}
\label{Corollary proof}
\begin{proof}
Because $P(\hat S|H=0)$ and $P(\hat S|H=1)$ follow beta distribution, according to Eq. \ref{prior_infer}, we can know that the hypothesis space $\cal{G}$ constructed by $g(\hat S;\alpha,\beta,c)$ includes the calibration curve $P(H=1|\hat S)$. According to Theorem \ref{Consistenty}, $TCE_{bpm}$ is a consistent calibration measure.
\end{proof}

\subsection{The Proof of Theorem \ref{efficiency}}
\label{efficiency_proof}
\begin{proof}
Before proving this theorem, we need to prove Lemma \ref{lemma1}.
\begin{lemma}
	For the function family:
	$$
	\begin{array}{l}
		{{\cal G}} = \{ g(\hat S;\alpha ,\beta ,c) = \frac{1}{{1 + {{\hat S}^{ - \alpha }}{{(1 - \hat S)}^\beta } \cdot {e^c}}}\\
		|\alpha  > 0,\beta  > 0,c \in ( - \infty , + \infty )\} ,
	\end{array}
	$$
	only three different observations $\{(\hat S^{*}_{1},g^{*}_{1}),(\hat S^{*}_{2},g^{*}_{2}),(\hat S^{*}_{3},g^{*}_{3})\}$ from any target unknown function $g(\hat S;\alpha^{*},\beta^{*},c^{*}) \in \mathcal{G}$ are needed to find $(\alpha^{*},\beta^{*},c^{*})$ by Algorithm \ref{Estimating calibration curve alg}.
	\label{lemma1}
\end{lemma}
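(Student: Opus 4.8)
\textbf{Proof proposal for Lemma \ref{lemma1}.}

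The plan is to show that the three parameters $(\alpha,\beta,c)$ can be recovered exactly from three distinct observation points by reducing the problem to a linear system. First I would rewrite the defining relation $g = \frac{1}{1 + \hat S^{-\alpha}(1-\hat S)^{\beta} e^{c}}$ in an invertible form: for any observation $(\hat S^{*}_{i}, g^{*}_{i})$ with $\hat S^{*}_{i} \in (0,1)$ and $g^{*}_{i} \in (0,1)$, we have $\frac{1}{g^{*}_{i}} - 1 = (\hat S^{*}_{i})^{-\alpha}(1-\hat S^{*}_{i})^{\beta} e^{c}$, and taking logarithms gives
\begin{equation}
	\ln\!\left(\frac{1 - g^{*}_{i}}{g^{*}_{i}}\right) = -\alpha \ln \hat S^{*}_{i} + \beta \ln(1 - \hat S^{*}_{i}) + c.
\end{equation}
This is linear in the unknowns $(\alpha,\beta,c)$. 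Collecting the three observations $i=1,2,3$ produces a $3\times 3$ linear system $M \cdot (\alpha,\beta,c)^{\top} = y$, where the $i$-th row of $M$ is $(-\ln \hat S^{*}_{i},\ \ln(1-\hat S^{*}_{i}),\ 1)$ and $y_{i} = \ln\frac{1-g^{*}_{i}}{g^{*}_{i}}$.

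Next I would argue that this system has a unique solution whenever the three confidence values $\hat S^{*}_{1},\hat S^{*}_{2},\hat S^{*}_{3}$ are distinct and lie in the open interval $(0,1)$. This amounts to showing $\det M \neq 0$, i.e., that the three points $(\ln \hat S^{*}_{i},\ \ln(1-\hat S^{*}_{i}))$ are not collinear in $\mathbb{R}^2$. I would establish this by a convexity/strict-monotonicity argument: the map $u \mapsto (\ln u, \ln(1-u))$ traces a strictly convex curve on $(0,1)$, since along it $\ln(1-u)$ is a strictly convex function of $\ln u$ (one can check the second derivative of $u\mapsto \ln(1-u)$ with respect to the variable $t = \ln u$ is strictly positive), so no three distinct points on it can be collinear, hence $\det M \neq 0$. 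Since the target function $g(\hat S;\alpha^{*},\beta^{*},c^{*}) \in \mathcal{G}$, the vector $(\alpha^{*},\beta^{*},c^{*})$ is a solution of this system, and by uniqueness it is \emph{the} solution; therefore solving the linear system recovers $(\alpha^{*},\beta^{*},c^{*})$ exactly.

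Finally I would connect this to Algorithm \ref{Estimating calibration curve alg}: when the three bin summaries are taken from noiseless observations of the target function, the per-bin empirical frequencies $N_{\hat S_{b}}^{pos}/N_{\hat S_{b}}$ equal $g^{*}_{b} = g(\hat S_{b};\alpha^{*},\beta^{*},c^{*})$, so the objective in Eq. \ref{equivalence} is minimized (attaining its lower bound, since each exponential term is at least $1$ and equals $1$ exactly when $g(\hat S_{b};\theta)$ matches the frequency) precisely at the parameter vector $(\alpha^{*},\beta^{*},c^{*})$ returned by the linear solve, which is unique by the determinant argument. The main obstacle I anticipate is the non-degeneracy step — verifying $\det M \neq 0$ cleanly — together with handling the boundary cases $\hat S^{*}_{i} \in \{0,1\}$ or $g^{*}_{i} \in \{0,1\}$ where the logarithmic reparametrization breaks down; I would dispose of these by restricting attention to representative bins with interior confidence values, which is exactly the regime in which Algorithm \ref{Estimating calibration curve alg} operates in practice.
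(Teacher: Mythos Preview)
Your approach is essentially the same as the paper's: both rewrite $g = \frac{1}{1+\hat S^{-\alpha}(1-\hat S)^{\beta}e^{c}}$ via the logit transform to obtain a system that is linear in $(\alpha,\beta,c)$, and then argue that three observations suffice to pin down the parameters. The paper stops at ``this is a system of linear equations in $\alpha,\beta,c$; since the target is in $\mathcal{G}$ there is a solution, and with at least three equations it is unique.''

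Where you differ is that you actually justify the uniqueness step by checking $\det M \neq 0$, which the paper simply asserts. Your non-collinearity argument for the points $(\ln \hat S^{*}_{i},\ln(1-\hat S^{*}_{i}))$ is the right idea; one small correction is that $\ln(1-u)$ as a function of $t=\ln u$ has second derivative $-e^{t}/(1-e^{t})^{2}<0$, so the curve is strictly \emph{concave} rather than convex --- but strict concavity equally forbids three collinear points, so your conclusion stands. You also make explicit the link to Algorithm \ref{Estimating calibration curve alg} (the objective in Eq.~\ref{equivalence} attains its global minimum exactly at the interpolating parameters), which the paper leaves implicit. In short: same route, but your version fills in the invertibility gap that the paper glosses over.
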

\begin{proof}
 The solution process of Algorithm \ref{Estimating calibration curve alg} is equivalent to solving the following system of equations:
 \begin{equation}
 	\left\{ {\begin{array}{*{20}{c}}
 			{\frac{1}{{1 + \hat S{{_1^*}^{ - \alpha }}{{(1 - \hat S_1^*)}^\beta } \cdot {e^c}}} = g_1^*,}\\
 			{\frac{1}{{1 + \hat S{{_2^*}^{ - \alpha }}{{(1 - \hat S_2^*)}^\beta } \cdot {e^c}}} = g_2^*,}\\
 			\vdots \\
 			{\frac{1}{{1 + \hat S{{_3^*}^{ - \alpha }}{{(1 - \hat S_3^*)}^\beta } \cdot {e^c}}} = g_3^*.}
 	\end{array}} \right.
 	\label{eff1}
 \end{equation}
Equivalently transform Eq. $\ref{eff1}$ to get:
\begin{equation}
\left\{ {\begin{array}{*{20}{c}}
		{ - \alpha  \cdot \log (\hat S_1^*) + \beta  \cdot \log (\hat S_1^*) + c = \log (\frac{1}{{g_1^*}} - 1),}\\
		{ - \alpha  \cdot \log (\hat S_2^*) + \beta  \cdot \log (\hat S_2^*) + c = \log (\frac{1}{{g_2^*}} - 1),}\\
		\vdots \\
		{ - \alpha  \cdot \log (\hat S_3^*) + \beta  \cdot \log (\hat S_3^*) + c = \log (\frac{1}{{g_3^*}} - 1).}
\end{array}} \right.
\label{eff2}
\end{equation}
 Eq. $\ref{eff2}$ is a system of linear equations about $\alpha$, $\beta$, $c$. Due to $g(\hat S;\alpha^{*},\beta^{*},c^{*}) \in \mathcal{G}$, therefore Eq. $\ref{eff2}$ has a unique solution when the number of equations is greater than or equal to 3. That is, the number of different observations is greater than or equal to 3. 
\end{proof}
Let $g(\hat S;{\alpha ^*},{\beta ^*},{c^*}) = \mathop {{\mathop{\rm argmin}\nolimits} }\limits_{g \in G} {E_{\hat S}}[{e^{{{(g(\hat S;\alpha ,\beta ,c) - P(H = 1|\hat S))}^2}}}]$. According to Eq. \ref{prior_infer}, because $P(\hat S|H=0)$ and $P(\hat S|H=1)$ follow beta distribution, $E_{\hat S}|g(\hat S;{\alpha ^*},{\beta ^*},{c^*})-P(H=1|\hat S)|=0$. Then:
\begin{equation}
\begin{array}{*{20}{l}}
	{{E_{\hat S}}|g(\hat S;{\alpha _D},{\beta _D},{c_D}) - P(H = 1|\hat S)|}\\
	{ \le {E_{\hat S}}|g(\hat S;{\alpha _D},{\beta _D},{c_D}) - g(\hat S;{\alpha ^*},{\beta ^*},{c^*})|}\\
	{ + {E_{\hat S}}|P(H = 1|\hat S) - g(\hat S;{\alpha ^*},{\beta ^*},{c^*})|}\\
	{ = {E_{\hat S}}|g(\hat S;{\alpha _D},{\beta _D},{c_D}) - g(\hat S;{\alpha ^*},{\beta ^*},{c^*})|.}
\end{array}
\end{equation}
Constructing dataset ${D^*} = {\{ (\hat S,{N_{\hat S}},{N_{\hat S}} \cdot P(H = 1|\hat S))\} _{0 \le \hat S \le 1}}$ from $g(\hat S;{\alpha ^*},{\beta ^*},{c^*})$. Due to Theorem \ref{calibration_curve_lipschitz}, $g(\hat S;\alpha_{D},\beta_{D},c_{D})$ is Lipschitz continuous $w.r.t.$ $D$. Then:
\begin{equation}
\begin{array}{l}
	{E_{\hat S}}|g(\hat S;{\alpha _D},{\beta _D},{c_D}) - g(\hat S;{\alpha ^*},{\beta ^*},{c^*})|\\
	= {E_{\hat S}}|g(\hat S;{\alpha _D},{\beta _D},{c_D}) - g(\hat S;{\alpha _{{D^*}}},{\beta _{{D^*}}},{c_{{D^*}}})|\\
	\le L \cdot {E_{\hat S}}|\frac{{N_{\hat S}^{pos}}}{{{N_{\hat S}}}} - {\rm{P(H = 1|}}\hat S{\rm{)}}|.
\end{array}
\end{equation}
According to Hoeffding’s inequality, then:
\begin{equation}
\begin{array}{l}
P(|\frac{{N_{\hat S}^{pos}}}{{{N_{\hat S}}}} - P(H = 1|\hat S)| \ge \varepsilon )\\
= P(|N_{\hat S}^{pos} - {N_{\hat S}} \cdot P(H = 1|\hat S)| \ge {N_{\hat S}} \cdot \varepsilon ) \le {e^{ - 2 \cdot {N_{\hat S}} \cdot {\varepsilon ^2}}}.
\end{array}
\end{equation}
Therefore, $\forall \delta \in (0,1)$, when $N_{\hat S} \ge \frac{\rm{In}(1/\delta)}{2\varepsilon^{2}}$, satisfy the following result with $1-\delta$ probability:
\begin{equation}
|\frac{{N_{\hat S}^{pos}}}{{{N_{\hat S}}}} - P(H = 1|\hat S)| \le \varepsilon.
\end{equation}
Using union bound: $\forall \delta \in (0,1)$, when $N=N^{'}\cdot N_{\hat S} \ge \frac{N^{'}\cdot In(1/\delta)}{2\varepsilon^{2}}$, satisfy the following result with $1-\delta$ probability:
\begin{equation}
\begin{array}{l}
	L \cdot {E_{\hat S}}|\frac{{N_{\hat S}^{pos}}}{{{N_{\hat S}}}} - {\rm{P(H = 1|}}\hat S{\rm{)}}| \\
	\le L \cdot \frac{1}{{{N^{'}}}} \sum\limits_{i = 1}^{{N^{'}}} {|\frac{{N_{\hat S}^{pos}}}{{{N_{\hat S}}}} - {\rm{P(H = 1|}}\hat S{\rm{)}}|} \\
	\le L \cdot \varepsilon.
\end{array}
\end{equation}
Where $N^{'}$ the number of sampling locations of $\hat S$. According to Lemma 1, $g(\hat S;\alpha_{D},\beta_{D},c_{D})$ can be uniquely determined when $N^{'}\ge 3$. Therefore, the minimum sample size required is $\frac{3\cdot \rm{In}(1/\delta)}{2\varepsilon^{2}}$. Proof is completed.
\end{proof}

\section{Results}
\subsection{Experimental Data}
\label{Experimental data}
\subsubsection{Selection of publicly logit datasets}
In order to show the effect of the calibration curve estimated by the proposed method, this paper conducted experiments and effect visualization on ten public logit datasets \cite{roelofs2022mitigating}. The ten publicly available logit datasets arise from training four different architectures (ResNet, ResNet-SD, Wide-ResNet, and DenseNet) \cite{he2016deep, huang2017densely,zagoruyko2016wide,huang2016deep} on three different image datasets (CIFAR-10/100 and ImageNet) \cite{deng2009imagenet,krizhevsky2009learning}.

\subsubsection{Selection of True Distribution}
\label{selection}
When the true calibration curve distribution and confidence score distribution are known and Algorithm \ref{alg2} is used, a realistic calibration dataset, which obeys these true distributions, can be simulated. By using the simulated calibration dataset to estimate the calibration curve and then comparing it with the true calibration curve, the effect of the estimation can be observed. Similarly, by using the simulated calibration dataset to calculate existing calibration metrics and then comparing them with the known true calibration error, it becomes possible to determine whether the calibration metrics are good or bad.

To make the selected calibration curve and confidence score distribution representative, we refer to the fitting results of $Roelofs\ et\ al.$ \cite{roelofs2022mitigating}. They use the binary generalized linear model (GLM) to fit the calibration curve on public logit datasets and the Akaike Information Criteria (AIC) to select the optimal parameter model. Furthermore, they fit the confidence score distributions on public logit datasets using beta distributions.

We randomly selected five parameter models from their fitted parameter models as our true calibration curves and true confidence score distributions. The selected true distributions are shown in Table \ref{Selection of true distribution}, and the calibration curve shape of each true distribution is shown in Fig. \ref{plot_true_calibration_line}. There are obvious differences between these true calibration curves.
\begin{table}[H]
	\centering
	\footnotesize
	\setlength\tabcolsep{1pt}
	\renewcommand{\arraystretch}{1.3}
	\begin{tabular}{ccc}
		\toprule
		Number&Calibration curve&Confidence score\\
		\toprule
		D1&${\mathop{\rm logit^{-1}}\nolimits} ({\rm{ - 0}}.{\rm{88}} + 0.49 \cdot {\mathop{\rm logit}\nolimits} (\hat S))$&$Beta(2.77,0.04)$\\
		\hline
		D2&${\rm{logfli}}{{\rm{p}}^{ - 1}}({\rm{ - 0}}.12 + 0.58 \cdot {\rm{logflip}}(\hat S))$& $Beta(2.17,0.03)$\\
		\hline
		D3&${\log ^{ - 1}}( - 0.03 + 1.27 \cdot {\rm{log}}(\hat S))$& $Beta(1.12,0.11)$\\
		\hline
		D4&${{\mathop{\rm logit}\nolimits} ^{ - 1}}( - 0.77 - 0.80 \cdot {\rm{logflip(}}\hat S{\rm{)}})$& $Beta(1.13,0.20)$\\
		\hline
		D5&${{\mathop{\rm logit}\nolimits} ^{ - 1}}({\rm{ - 0}}.97 + 0.34 \cdot {\mathop{\rm logit}\nolimits} (\hat S))$& $Beta(1.19,0.22)$\\
		\bottomrule 
	\end{tabular}
\caption{Selection of true distribution, where $\rm{logflip} = \mathop{log}(1 - x)$. }
\label{Selection of true distribution}
\end{table}
\begin{figure}[h]
	\centering
	\includegraphics[width=0.4\textwidth]{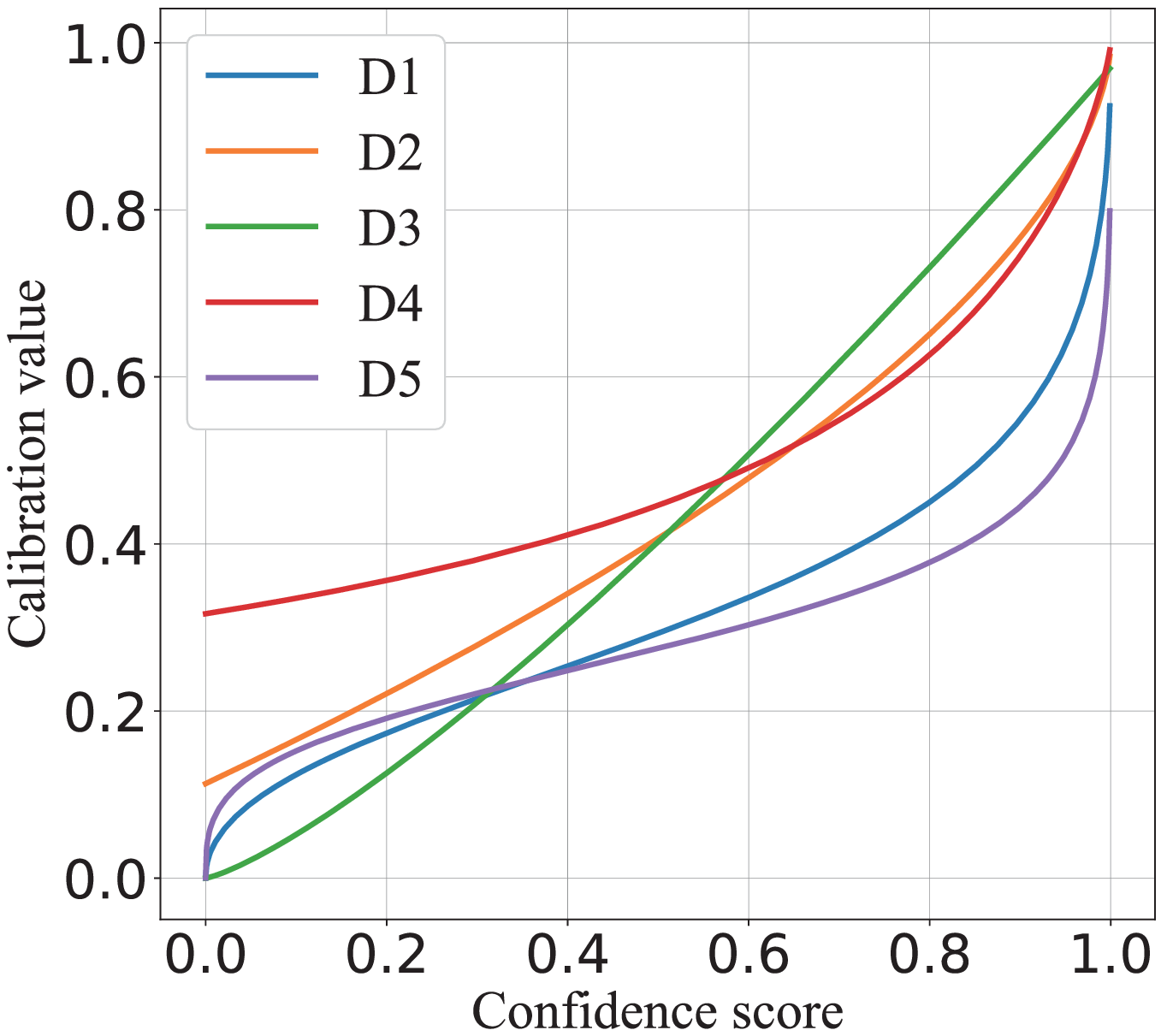}
	\caption{Visualization of the selected true calibration curve.}
	\label{plot_true_calibration_line}
\end{figure}

\subsection{Implementation Details}
\label{Hyperparameter settings}
In all experiments, $\mathcal{B}$ in Algorithm \ref{Estimating calibration curve alg} is set to a set of equal mass binnings. The number of bins is between $N/100, N/20]$, which ensures that each bin contains at least 20 samples and at most 100 samples. Nelder-Mead algorithm \cite{nelder1965simplex} is used to solve for the parameters of the calibration curve in maximum likelihood estimation. When calculating the histogram binning, equal mass binnings with bin numbers from 10 to 50 are considered. To measure the difference between the estimated calibration curve and the true calibration curve across $[0,1]$, the mean absolute difference is used, which is calculated as follows:
\begin{eqnarray}
	EAD= \frac{1}{1000} \sum_{i=0}^{1000} |g(\frac{i}{1000})-g^{*}(\frac{i}{1000})|,
	\label{EAD}
\end{eqnarray}
where $g(\cdot)$ represents the estimated calibration curve and $g^{*}(\cdot)$ represents the true calibration curve. $EAD$ differs from calibration metrics in that it does not consider the distribution of confidence scores but looks at each confidence point equally. Therefore, $EAD$ better measures the degree of fit across the overall curve and better captures the degree of fit in the low-density regions of confidence scores.

In addition to the naive ECE with equal mass binning, five state-of-the-art calibration metrics are selected to compare with the proposed calibration  metric $TCE_{bpm}$, as shown in Table \ref{Selection of comparison methods}. In the comparison of calibration metrics, to avoid the contingency caused by random errors, the final result is the average of 100 running results. For the calculation of binning-based calibration metrics, the number of bins is set to 15, which is a popular practice in the field \cite{guo2017calibration,roelofs2022mitigating}. This paper considers the comparison of calibration metrics under 10 cases with sample sizes ranging from 500 to 5000 (with a step size of 500).
\begin{table}[H]
	\centering
	\footnotesize
	\setlength\tabcolsep{0.5pt}
	\renewcommand{\arraystretch}{1.3}
	\begin{tabular}{ccc}
		\toprule
		Type&Name&Source\\
		\toprule
		\multirow{2}{*}{Binning-based}&$ECE_{debaised}$ \cite{kumar2019verified}&NeurIPS\\
		&$ECE_{sweep}$ \cite{roelofs2022mitigating}&AISTATS\\
		\hline
		\multirow{3}{*}{Binning-free}&$KS-error$ \cite{gupta2020calibration}&NeurIPS\\
		&$smECE$ \cite{blasioksmooth}&ICLR\\
		&$LS-ECE$ \cite{chidambaramflawed}&ICML\\
		\hline
		\bottomrule 
	\end{tabular}
\caption{Selection of comparison methods}
\label{Selection of comparison methods}
\end{table}

The hardware configuration used in these experiments includes Intel$^{\circledR}$ Core$^{TM}$ I7-10700 CPU, 3.70GHz, 125.5GB memory, NVIDIA Quadro RTX 5000 graphics card, 16GB of video memory. The software configuration for these experiments includes Ubuntu 20.04.3 LTS, Python 3.8.12, Torch 1.8.1+cu102, and Scipy 1.10.0.
\subsection{Estimated Results of Calibration Curves}
\label{Comparison in all confidence scores}
\subsubsection{Results in Real Datasets}
\label{Comparison in real datasets in all confidence scores}
Fig. \ref{fig_fit_real_data} shows the visualization of estimated calibration curves on nine public logit datasets. In each subplot of Fig. \ref{fig_fit_real_data}, the estimated calibration curve on real data aligns well with histogram binning results from various binning schemes and closely matches the mean result, which indicates that our method is relatively accurate and robust. From the length of the effective calibration result estimated by the histogram binning, it can be seen that there is different sharpness in the nine datasets. Nonetheless, our method achieves such relatively accurate and robust estimates in all these cases of sharpness. 
\begin{figure*}[h]
	\centering
	\includegraphics[width=1.\textwidth]{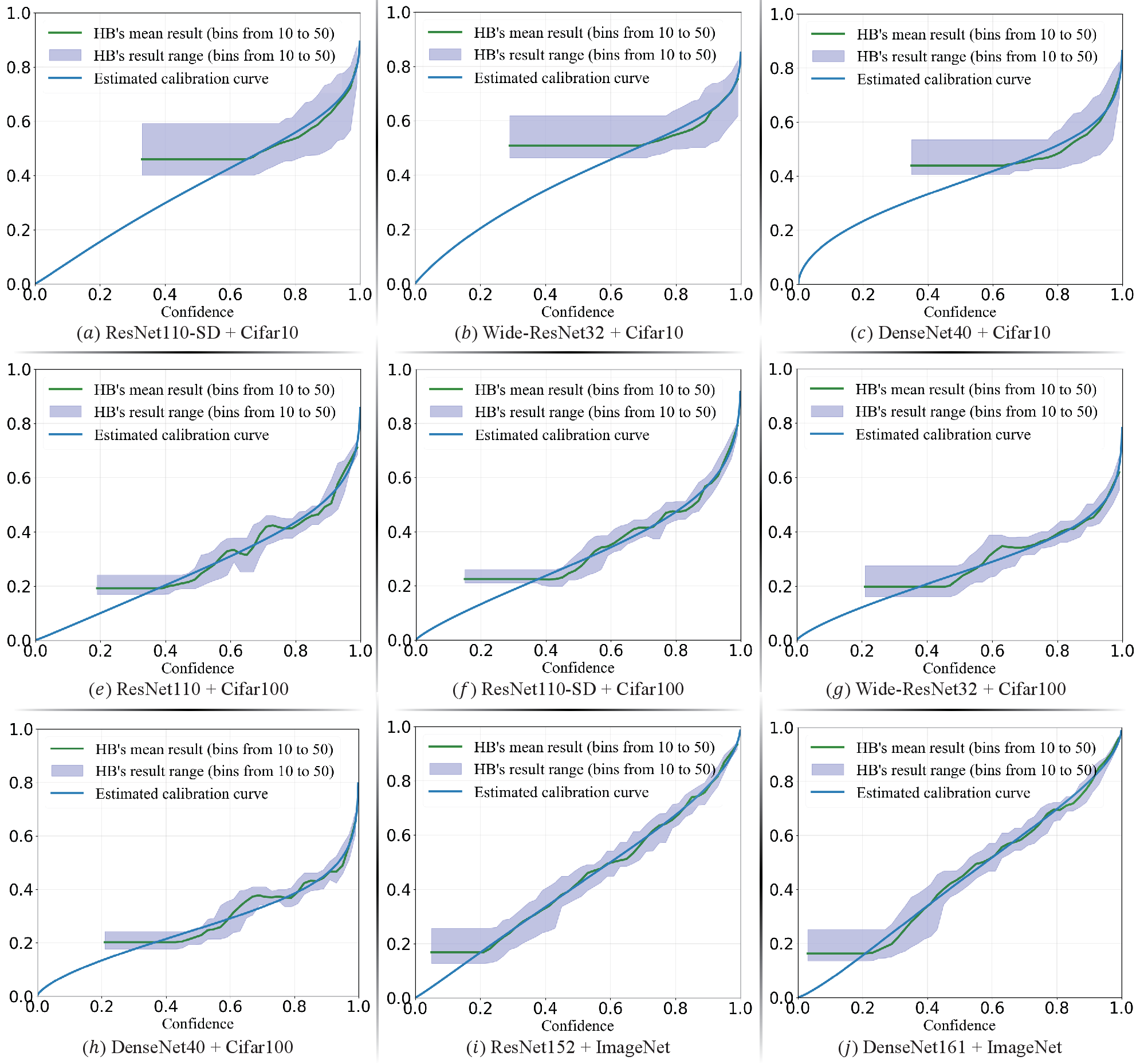}
	\caption{Visualization of the estimated calibration curve on the public logit dataset.}
	\label{fig_fit_real_data}
\end{figure*}
\subsubsection{Results in Simulating Datasets}
\label{Curve shape}
The estimated results of calibration curves on D2 to D5 in Table \ref{Selection of true distribution} are shown in Fig. \ref{fit_true_curve}. It can be seen that for each true calibration distribution, the calibration curve estimated by our method is closely aligned with the true calibration curve. This close alignment is also reflected in $EAD$ in Table \ref{Mean absolute error}, where $EAD$ of our method is much smaller than $EAD$ of the mean result of histogram binning. The $P$-values are much less than 0.01, which tells us that these decreases in $EAD$ are not due to chance. This validates the effectiveness of our method.
\begin{table}[h]
	\centering
	\footnotesize
	\setlength\tabcolsep{3pt}
	\renewcommand{\arraystretch}{0.8}
	\begin{tabular}{cccc}
		\toprule
		Number&Mean result of HB&Our method&$P$-value\\
		\toprule
		D1&0.0233 ($\pm$0.0047)&\textbf{0.0099} ($\pm$0.0062)&1.77$\times 10^{-15}$\\
		D2&0.1710 ($\pm$0.0200)&\textbf{0.0368} ($\pm$0.0196)&1.77$\times 10^{-15}$\\
		D3&0.0299 ($\pm$0.0037)&\textbf{0.0161} ($\pm$0.0063)&1.77$\times 10^{-15}$\\
		D4&0.0168 ($\pm$0.0042)&\textbf{0.0105} ($\pm$0.0048)&1.77$\times 10^{-15}$\\
		D5&0.0181 ($\pm$0.0030)&\textbf{0.0067} ($\pm$0.0029)&1.77$\times 10^{-15}$\\
		\bottomrule 
	\end{tabular}
\caption{Mean absolute error ($EAD$) of the calibration estimate, where results are expressed as mean ($\pm$standard) under 100 runs, bolding represents optimal results, and $P$-value represents the significance probability of Wilcoxon Signed-Rank Test.}
\label{Mean absolute error}
\end{table}
\begin{figure*}[h]
	\centering
	\includegraphics[width=1.\textwidth]{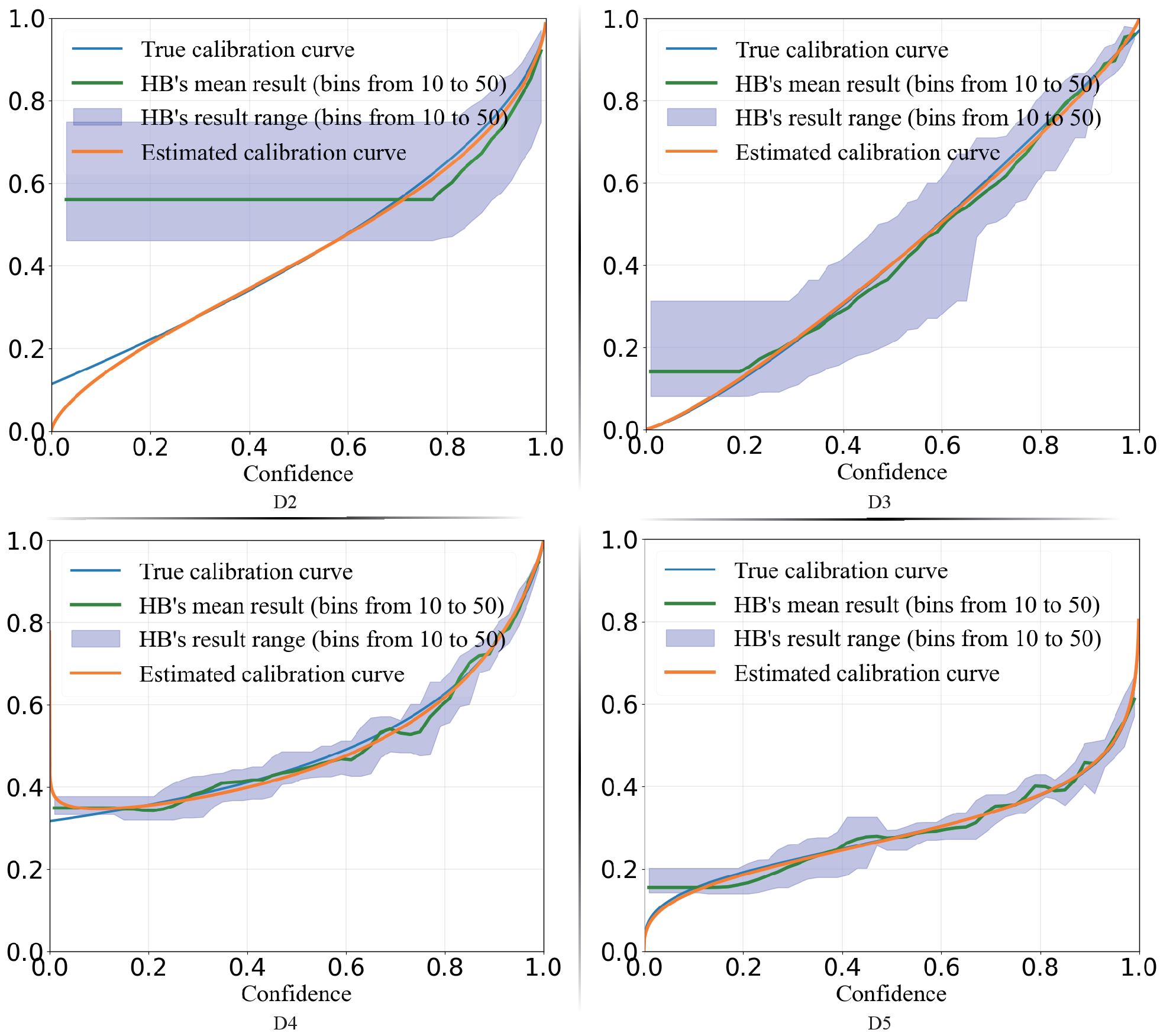}
	\caption{Visualization of estimated results of calibration curves.}
	\label{fit_true_curve}
\end{figure*}

\subsection{Estimated Results of Calibration Metrics}
\label{compare_metrics}
The calculation results of calibration metrics on the dataset simulated by the true distribution D2 to D5 are shown in Fig. \ref{metrics_in_all}. Fig. \ref{metrics_in_all} shows four different sizes of true calibration errors: around 1\%, around 7\%, around 10\%, and around 27\%, representing four degrees of miscalibration. It can be seen that all the selected calibration metrics have a relatively good estimate of the calibration error, with a difference of less than 2\% from the TCE. Sometimes, the results of multiple metrics even overlap. This is surprising because the principles behind these calibration metrics differ when they are proposed, but their results are sometimes relatively consistent. Overall, the results of $TCE_{bpm}$ are relatively stable and are closer to TCE in many cases. Especially in D3 of Fig. \ref{metrics_in_all}, $TCE_{bpm}$ almost completely outperforms other calibration metrics. This verifies that $TCE_{bpm}$ is effective and competitive.

\begin{figure*}[h]
	\centering
	\includegraphics[width=1.\textwidth]{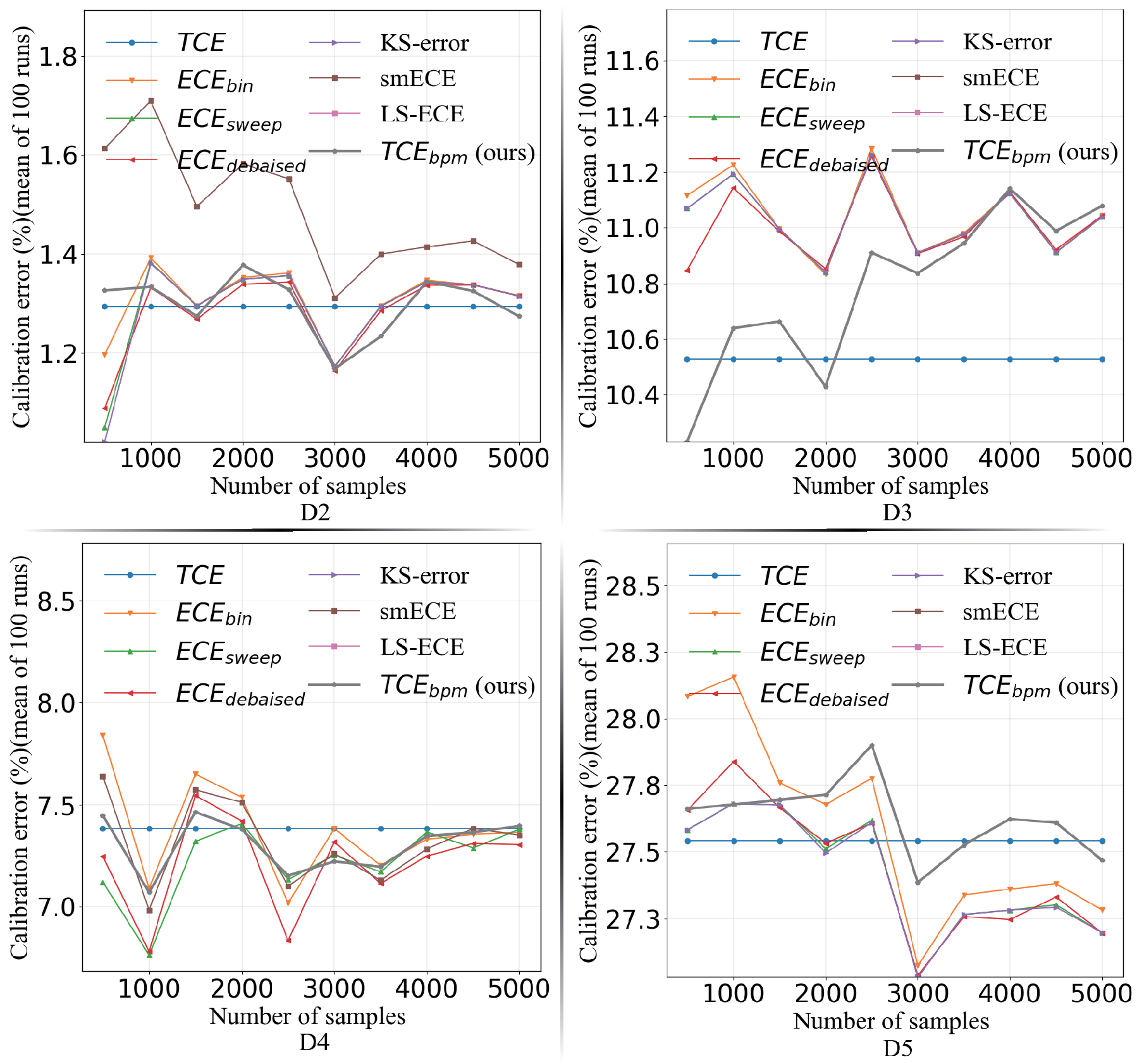}
	\caption{Calibration metrics comparison in all confidence scores.}
	\label{metrics_in_all}
\end{figure*}

\subsection{Comparison with Other Calibration Methods}
\label{effect of other calibration methods}
Table \ref{Calibration methods comparation} shows the comparison results of calibration metrics between our calibration method and other calibration methods on Wide-ResNet32's logits dataset of Cifar100 and DenseNet162 logits dataset of ImageNet. Comparing the calibrated and uncalibrated metrics shows that all considered calibration methods can significantly improve confidence. Obviously, our method outperforms all other methods in comparison on the two real datasets of different sizes.

\begin{table*}[h]
	\centering
	\footnotesize
	\setlength\tabcolsep{3pt}
	\renewcommand{\arraystretch}{1.3}
	\begin{tabular}{c|c|cccccc}
		\toprule
		Network and Dataset&Calibration methods&$ECE_{bin} \downarrow$&$ECE_{debaised} \downarrow$&$ECE_{sweep} \downarrow$&KS-error$\downarrow$&smECE$\downarrow$&$TCE_{bpm} \downarrow$\\
		\toprule
		\multirow{6}{*}{\makecell[c]{Wide-ResNet32\\ Cifar100}}&Uncalibration&0.18802&0.18794&0.18783&0.18784&0.17096&0.25405\\
		&Temperature scaling&0.01295&0.01013&0.01084&0.00861&0.01350&0.00997\\
		&Isotonic regression&0.01556&0.01670&0.01112&0.01130&0.01521&0.01694\\
		&Mix-n-Match&\underline{0.01161}&\underline{0.00787}&\underline{0.00944}&0.00741&\underline{0.01246}&\underline{0.00840}\\
		&Spline calibration&0.02170&0.02112&0.01838&\underline{0.00683}&0.01805&0.01136\\
		&TPM calibration (Ours)&\textbf{0.01118}&\textbf{0.00763}&\textbf{0.00899}&\textbf{0.00209}&\textbf{0.01216}&\textbf{0.00156}\\
		\hline
		\multirow{6}{*}{\makecell[c]{DenseNet162\\ ImageNet}}&Uncalibration&0.05722&0.05726&0.05719&0.05721&0.05620&0.06137\\
		&Temperature scaling&0.01966&0.01918&0.01913&0.00998&0.01916&0.01657\\
		&Isotonic regression&\underline{0.00745}&\underline{0.00401}&0.01069&0.00430&0.01202&0.00342\\
		&Mix-n-Match&0.01768&0.01725&0.01693&0.01061&0.01733&0.01472\\
		&Spline calibration&0.00880&0.00690&\underline{0.00692}&\underline{0.00190}&\underline{0.00882}&\underline{0.00230}\\
		&TPM calibration (Ours)&\textbf{0.00583}&\textbf{0.00307}&\textbf{0.00679}&\textbf{0.00188}&\textbf{0.00872}&\textbf{0.00042}\\
		\hline
		\bottomrule 
	\end{tabular}
\caption{Comparison with other calibration methods on real data. Bold represents the best result, and underline represents the second-best result.}
\label{Calibration methods comparation}
\end{table*}

\end{document}